\newtheorem{Theorem}{Theorem}
\newtheorem{Lemma}{Lemma}
\newtheorem{Claim}{Claim}
\newtheorem{Definition}{Definition}
\newtheorem{Corollary}{Corollary}
\newtheorem{Observation}{Observation}
\newcommand{\rad}{\text{rad}}
\newcommand{\OPT}{\textsc{OPT}}
\begin{document}
\title{Improved Lower Bounds on Multiflow-Multicut Gaps \footnote{A preliminary version of this paper appeared in the proceedings of APPROX-RANDOM 2025~\cite{kalantarzadeh_et_al:LIPIcs.APPROX/RANDOM.2025.14}}}

\author{
  Sina Kalantarzadeh \\
  University of Waterloo, Canada \\
  \texttt{s4kalant@uwaterloo.ca}
  \and
  Nikhil Kumar \\
  Tata Institute of Fundamental Research, India \\
  \texttt{kumar.nikhil@tifr.res.in}
}

\maketitle
\begin{abstract}
\noindent Given a set of source-sink pairs, the maximum \emph{multiflow} problem asks for the maximum total amount of flow that can be feasibly routed between them. The minimum \emph{multicut}, a dual problem to multiflow, seeks the minimum-cost set of edges whose removal disconnects all the source-sink pairs. It is easy to see that the value of the minimum multicut is at least that of the maximum multiflow, and their ratio is called the \emph{multiflow-multicut gap}. The classical max-flow min-cut theorem states that when there is only one source-sink pair, the gap is exactly one. However, in general, it is well known that this gap can be arbitrarily large. In this paper, we study this gap for classes of planar graphs and establish improved lower bound results. In particular, we show that this gap is at least $\frac{16}{7}$ for the class of planar graphs, improving upon the decades-old lower bound of 2. More importantly, we develop new techniques for proving such a lower bound, which may be useful in other settings as well.
\end{abstract}

\newpage

\section{Introduction}

Given an edge-weighted graph with \( k \) source-sink pairs, a \emph{multicut} is a set of edges whose removal disconnects all the source-sink pairs. The \emph{minimum multicut problem} seeks a multicut with the minimum total edge weight. This problem generalizes the classical minimum \( s \)-\( t \) cut problem and has been extensively studied in the past. Computing the minimum multicut is NP-hard, even in highly restricted settings such as trees~\cite{garg1997primal}.

\noindent A problem closely related to the multicut problem is the multicommodity flow problem (also known as \emph{multiflow}). The goal of this problem is to maximize the total flow that can be routed between the source-sink pairs. If the flow is restricted to take only integer values, the problem is called the \emph{maximum integer multiflow problem}, which generalizes the well-known \emph{edge-disjoint paths} problem.  
Since any source-sink path must use at least one edge of any multicut, the value of any feasible multicut is at least that of the maximum multicommodity flow.
In fact, it turns out that the LP relaxation of multicut problem is the linear programming dual of the multiflow problem.  The ratio of the minimum multicut to the maximum multicommodity flow is called the \emph{multiflow-multicut gap}. By the strong duality of linear programming, it follows that the integrality gap of the natural linear programming relaxation for the multicut also provides a bound on the multiflow-multicut gap, and vice versa.  

The famous max-flow min-cut theorem \cite{ford1956maximal} states that the multiflow-multicut gap is exactly 1 when \( k = 1 \), i.e., when there is exactly one source-sink pair. A well-known theorem by Hu \cite{hu1963multi} further establishes that the gap remains 1 when \( k = 2 \). However, this equality does not hold when there are three or more source-sink pairs, even for very simple graphs (see \cite{garg1997primal} for an example).  

\noindent Garg, Vazirani, and Yannakakis~\cite{garg1996approximate} proved a tight bound of \(\Theta(\ln k\)) on the multiflow-multicut gap for any graph \( G \). If \( G \) is a tree, then the multiflow-multicut gap is exactly \( 2 \) \cite{garg1997primal}. For \( K_r \)-minor-free graphs, Tardos and Vazirani~\cite{tardos1993improved} used the decomposition theorem of Klein, Plotkin, and Rao~\cite{klein1993excluded} to prove a bound of \(\mathcal{O}(r^3)\) on the multiflow-multicut gap. This bound was subsequently improved to \(\mathcal{O}(r^2)\) by Fakcharoenphol and Talwar~\cite{fakcharoenphol2003improved}, and then to \(\mathcal{O}(r)\) by Abraham et al.~\cite{abraham2014cops}. A tight bound of \(\Theta(\log r)\) was then obtained for graphs of bounded treewidth~\cite{filtser2024optimal, friedrich2023approximate}. Finally, building upon this long sequence of results, Conroy and Filtser~\cite{conroy2025protect} recently proved an asymptotically tight bound of \( \Theta(\log r) \) on the multiflow--multicut gap for \( K_r \)-minor-free graphs. Since planar graphs do not contain \( K_5 \) as a minor, it follows that the integrality gap of the minimum multicut problem for planar graphs is \(\mathcal{O}(1)\).  

The primary motivation behind the works mentioned above was to establish an asymptotic bound on the integrality gap (in terms of \( r \)) without optimizing the constants involved. However, for specific graph families, such as planar graphs, the constant obtained from these results is quite large (close to 100). Thus, determining the exact integrality gap remains an intriguing question. It is known that the integrality gap is at least 2 for trees, and consequently for planar graphs as well. Better upper and lower bounds for this problem remain elusive, serving as the primary motivation for this paper.

\subsection{Related Work: Demand Multicommodity Flow} 
In another well studied version of the problem, called the \emph{demand multicommodity flow}, we are given a demand value for each source-sink pair, denoted as \( d_i \) for the source-sink pair \( s_i \)-\( t_i \). The goal is to determine whether there exists a feasible flow satisfying all the demands. A necessary condition for the existence of a feasible flow is as follows: across every bi-partition \( (S, \Bar{S}) \) of the vertex set, the total demand that must be routed across \( (S, \Bar{S}) \) must not exceed the total capacity of edges crossing \( (S, \Bar{S}) \). This condition is known as the \emph{cut-condition}, and it is a sufficient condition for the existence of flows in trees, outerplanar graphs, and similar graph classes.  

\noindent In general, however, the cut-condition is not sufficient for the existence of a feasible flow. This leads to a natural question: what is the minimum relaxation of the cut-condition that ensures feasibility? Specifically, what is the smallest \( \alpha \geq 1 \) such that if the total capacity of edges across every bi-partition is at least \( \alpha \) times the demand across the partition, then a feasible flow is guaranteed? In their seminal work, Linial, London, and Rabinovich~\cite{Linial1995} showed that this gap is \(\Theta(\log k)\) for general graphs.  

In contrast to the multiflow-multicut gap, our understanding of the flow-cut gap for planar graphs remains limited. Rao~\cite{rao1999small} showed that the flow-cut gap for planar graphs is \(\mathcal{O}(\sqrt{\log n})\). However, the best known lower bound remains just 2~\cite{lee2010coarse, chekuri2013flow}, and it is conjectured that the true answer is \(\mathcal{O}(1)\)~\cite{gupta2004cuts}.\footnote{This conjecture is widely known as the Planar Embedding Conjecture or the GNRS Conjecture~\cite{gupta2004cuts}.}  

On the other hand, we have a much better understanding of this gap for series-parallel graphs, a subclass of planar graphs. The flow-cut gap is exactly 2 for series-parallel graphs~\cite{chakrabarti2008embeddings, lee2010coarse}. Given the current state of research, one might be tempted to claim that we understand multiflow-multicut gaps better than flow-cut gaps. However, somewhat surprisingly, the precise multiflow-multicut gap for series-parallel graphs remains unknown, despite the well-understood flow-cut gap. One of the primary motivations of this paper is to bridge this gap in our understanding.  

\section{Preliminaries} \label{section: preliminaries}

Given a graph $G$, we denote its vertex and edge sets by \( V(G) \) and \( E(G) \), respectively. We will use \( K_r \) to denote the complete graph on \( r \) vertices. In this paper, we will only be concerned with planar graphs. A graph $G$ is \emph{planar} if it does not contain \( K_5 \) or \( K_{3,3} \) as a minor. Equivalently, a graph is planar if it can be drawn in the plane without any of its edges crossing. Graphs in which every edge is contained in at most one cycle are called \emph{cactus} graphs. Cactus graphs are a subclass of series-parallel and planar graphs, and are arguably the simplest family of planar graphs after trees and cycles. Cactus and series-parallel graphs do not contain \( K_4 \) as a minor.

\noindent Let $G$ be a simple undirected graph with edge costs \( c \colon E(G) \rightarrow \mathbb{Q}_{\geq 0} \), and let \( \{(s_i, t_i)\}_{i=1}^{k} \) be the set of source-sink pairs. Let \( \mathcal{P}_i \) denote the set of all paths between \( s_i \) and \( t_i \) in \( G \), and let \( \mathcal{P} = \bigcup_{i=1}^{k} \mathcal{P}_i \). A \emph{multicut} is a set of edges \( F \subseteq E(G) \) such that every \( P \in \mathcal{P} \) contains at least one edge in \( F \). Equivalently, a multicut is a set of edges whose removal disconnects every source-sink pair. A \emph{multicommodity flow} is an assignment of non-negative real numbers to the paths in \( \mathcal{P} \) that respects the capacity constraints of the edges. In the \emph{maximum multiflow problem}, the objective is to find an assignment which maximizes the total value of flow routed. 

\noindent Given two arbitrary vertices $u, v \in V(G) $, we use $d_{G}(u, v)$ to denote the shortest path distance between $u$ and $v$ in $G$, if $G$ is clear in the context then we use $d(u,v)$ for simplicity. The diameter of $G$ is the maximum distance between a pair of vertices in $G$, i.e., $ \text{diam}(G) = \max_{u, v \in V(G)} d_{G}(u, v) $. We use $d_{G}(v,e)$ to denote the distance of a vertex $v$ from an edge $ e = (x, y) $, i.e., $ d_{G}(v, e) = \min \{ d_{G}(v, x), d_{G}(v, y) \} $.

\noindent For \( F \subseteq E(G) \), we use \(G\setminus F\) to denote the remaining graph after the removal of $F$ from $G$. For any \( v \in V(G) \), we use \( C_F(v) \) to denote the connected component of \( G\setminus F \) containing \( v \). We overload notation and also use \( C_F(v) \) to denote the set of vertices in the connected component containing \( v \). We define the radius of \( v \) with respect to \( F \) as the distance of the farthest vertex from \( v \) in \( C_F(v) \), i.e., \( \text{rad}_F(v) = \max_{u \in C_F(v)} d_{G}(v, u) \). In addition, the diameter of \( F \) is the maximum diameter of a connected component after the removal of \( F \) from \( G \), i.e., \( \text{diam}(F) = \max_{v \in V(G)} \text{diam}(C_F(v)) \). Given \( t \in \mathbb{R}_{\geq 0} \) as a parameter, we say that \( F \) forms a \( t \)-diameter decomposition if \( \text{diam}(F) < t \). We denote the set of all \( t \)-diameter decompositions of \( G \) by \( \mathcal{F}_t(G) \). Note that when referring to the distance between two vertices \( u, v \) in a component \( C \), \( d_{G}(u,v) \) denotes their distance in \( G \), rather than in the subgraph induced by \( C \), i.e., \( G[C] \). 

\subsection{Linear Programming Relaxation for the Minimum Multicut Problem}
We begin by describing an integer programming (IP) formulation for the minimum multicut problem. For each edge \( e \in E(G) \), we introduce an integer variable \( x(e) \in \{0, 1\} \), which indicates whether the edge is selected in the multicut. For a given path \( P \), we define \( x(P) = \sum_{e \in E(P)} x(e) \). A feasible multicut must include at least one edge from each source-sink path, so we impose the constraint \( x(P) \geq 1 \) for all \( P \in \mathcal{P} \), ensuring that each path is \emph{cut} by at least one edge. We relax the integrality constraints to obtain the linear programming (LP) relaxation of the multicut problem, which is formulated as follows:

\begin{equation}\label{LP of multicut}
\begin{aligned}
    \min \;& \sum_{e \in E(G)} c(e)\, x(e) \\
    \text{s.t. }\;& x(P) \;\geq\; 1 && \forall\, P \in \mathcal{P}, \\
                 & x(e) \;\geq\; 0  && \forall\, e \in E(G).
\end{aligned}
\end{equation}

\noindent Even though there are an exponential number of constraints, it is well known that the optimal solution to this LP can be computed in polynomial time \cite{garg1996approximate}. We denote the optimal solutions of the integer and linear programs as \( \OPT_{IP} \) and \( \OPT_{LP} \), respectively. We refer to \( \OPT_{LP} \) as the minimum fractional multicut. We know that the value of the maximum multiflow is equal to the minimum fractional multicut. Furthermore, a bound on the integrality gap of the LP relaxation for the multicut problem provides the same bound for the multiflow-multicut gap. Therefore, from this point onward, we will focus solely on the integrality gap of the multicut LP. We now formally define the integrality gap of the minimum multicut problem on a family of graphs.

\begin{Definition}\label{Def: Integrality Gap}
Let \( \mathcal{G} \) be a family of graphs, and let \( \mathcal{M}(\mathcal{G}) \) denote the family of all instances of the minimum multicut problem on \( \mathcal{G} \), obtained by assigning arbitrary costs to the edges and selecting a set of source--sink pairs. The integrality gap \( \alpha_{\mathcal{M}(\mathcal{G})} \) of the minimum multicut problem on \( \mathcal{M}(\mathcal{G}) \) is defined as
\[
\alpha_{\mathcal{M}(\mathcal{G})} := \max_{M \in \mathcal{M}(\mathcal{G})} \frac{\OPT_{IP}(M)}{\OPT_{LP}(M)},
\]
where $\OPT_{IP}(M)$ is the optimal value of the integer program and $\OPT_{LP}(M)$ is the optimal value of its linear relaxation \eqref{LP of multicut}.
\end{Definition}

\noindent As mentioned in the introduction, \( \alpha_{\mathcal{M}(\textsc{tree})} = 2 \)~\cite{garg1997primal}, where \( \textsc{tree} \) denotes the family of all trees, and \( \alpha_{\mathcal{M}(\textsc{planar})} = \mathcal{O}(1) \)~\cite{klein1993excluded}, where \( \textsc{planar} \) denotes the family of all planar graphs.

\section{Our Results and Techniques} \label{section:result/tech}

We provide a partial answer to the questions raised above by showing that the integrality gap of the minimum multicut problem for the family of cactus graphs (and therefore for series-parallel graphs and planar graphs) is strictly greater than 2. In particular, we show that the multiflow-multicut gap is at least \( \frac{16}{7} \) for the class of \emph{cactus} graphs.

We first develop a novel technique to argue that the integrality gap of the multicut LP is at least \( \frac{20}{9} \), and later refine it to obtain an improved lower bound of $\frac{16}{7}$. We observe that the integrality gap of the multicut LP for a class of graphs is \( \alpha \) if and only if any fractional solution to the natural linear programming relaxation of the minimum multicut problem can be approximately written as a convex combination (or equivalently a probability distribution) of feasible multicuts. Furthermore, a feasible multicut can be interpreted in terms of small diameter decompositions (i.e., a set of edges whose removal results in connected components of small diameter) with an appropriate distance function. Therefore, if a graph class admits an integrality gap of at most \( \alpha \), then there exists a set of small diameter decompositions that do not cut any fixed edge too many times. We describe this in detail in Section~\ref{Sec:sdd_multicut_equiv}.

Our crucial insight is that if the integrality gap is \( \alpha \) for a class of graphs, then there exists a well-structured set of small diameter decompositions that can be used to construct the aforementioned convex combination. These structured decompositions are inspired by the well-known single-source distance-based decomposition algorithms for trees. We also describe this in detail in Section~\ref{Sec:sdd_multicut_equiv}. The final step of the proof involves using these structural insights to argue that there cannot exist a small diameter decomposition with a small value of \( \alpha \) for the family of cactus graphs. Note that this proof is non-constructive and does not lead to an explicit example with a large gap. Nevertheless, this proof provides sufficient structural insights into instances with a large integrality gap, allowing us to construct explicit examples of cactus graphs where the gap is at least \( \frac{20}{9} \) (unfortunately, we were unable to find an explicit example showing a lower bound of $\frac{16}{7}$). We emphasize that we attempted to construct these examples through an exhaustive computer search and manual crafting but were unsuccessful. Furthermore, the structural properties established in the first proof hold for very general classes of graphs, specifically those closed under edge subdivision and 1-sum operations, and may prove useful in other settings as well. 




\section{The Integrality Gap of Multicut and Small Diameter Decomposition}\label{Sec:sdd_multicut_equiv}

\noindent Let $\mathcal{G}$ be a family of graphs closed under taking minors and under edge subdivisions, and let $\mathcal{M}(\mathcal{G})$ denote the corresponding family of all minimum multicut instances on $\mathcal{G}$. Theorem~\ref{Theorem: the existence of uniform random distribution}, which is a direct application of the work of Carr and Vempala~\cite{VempalaConvexCombination} to the minimum multicut problem, shows that any feasible fractional solution to the LP relaxation can be approximately represented as a convex combination of feasible multicuts. For completeness, we include a proof, although it is not a novel contribution of this work.

\begin{Theorem}\label{Theorem: the existence of uniform random distribution}
Suppose we are given an instance $M \in \mathcal{M}(\mathcal{G})$. Let $\mathcal{F} \subseteq 2^{E(G)}$ be the set of all feasible multicuts for $M$, and let $x$ be a feasible fractional solution to the LP relaxation \eqref{LP of multicut}. Then there exists a probability distribution $y$ over $\mathcal{F}$ such that
\begin{equation*}\label{Equivalent}
    \sum_{\substack{F \in \mathcal{F} \\ e \in F}} y_{F} \;\leq\; \alpha_{\mathcal{M}(\mathcal{G})} \cdot x(e) \quad \forall e \in E(G).
\end{equation*}
\end{Theorem}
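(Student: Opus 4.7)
The plan is to follow the LP-duality argument of Carr and Vempala. The existence of the required distribution can be encoded as feasibility of the linear program: find $y_F \geq 0$ for $F \in \mathcal{F}$ with $\sum_F y_F \geq 1$ and $\sum_{F:\, e \in F} y_F \leq \alpha \cdot x(e)$ for every $e \in E(G)$. If such $y$ exists, then normalizing by $\sum_F y_F \geq 1$ yields a probability distribution, and the upper-bound constraints only get tighter under this rescaling (we divide the left-hand side by something $\geq 1$). So it suffices to consider the LP
\[
\max \sum_{F \in \mathcal{F}} y_F \quad \text{s.t.} \quad \sum_{F:\, e \in F} y_F \leq \alpha \cdot x(e) \ \forall e \in E(G), \quad y_F \geq 0,
\]
and to show that its optimum is at least $1$. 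Note $|\mathcal{F}|$ is finite, so standard finite-dimensional LP duality applies.

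I would then argue by contradiction: suppose the optimum is strictly less than $1$. Taking the LP dual, there exist nonnegative edge weights $w \colon E(G) \to \mathbb{Q}_{\geq 0}$ such that
\[
\sum_{e \in F} w(e) \geq 1 \ \ \text{for every } F \in \mathcal{F}, \qquad \sum_{e \in E(G)} \alpha \cdot x(e)\, w(e) < 1.
\]
In other words, $w$ has $w$-weight at least $1$ on every feasible integer multicut, yet $\sum_e w(e)\, x(e) < 1/\alpha$.

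Next, I would construct a new instance $M' \in \mathcal{M}(\mathcal{G})$ on the same graph with the same source-sink pairs, but with edge costs replaced by $w$. The multicut LP constraints $x(P) \geq 1$ and the set of feasible integer multicuts depend only on the graph and the source-sink pairs, not on the costs. Therefore $x$ is still LP-feasible for $M'$, giving $\OPT_{LP}(M') \leq \sum_e w(e)\, x(e) < 1/\alpha$, while $\OPT_{IP}(M') \geq 1$ by the dual feasibility condition. Hence
\[
\frac{\OPT_{IP}(M')}{\OPT_{LP}(M')} > \alpha = \alpha_{\mathcal{M}(\mathcal{G})},
\]
contradicting Definition~\ref{Def: Integrality Gap}.

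The only subtle point is the normalization step at the beginning; everything else is a mechanical application of LP duality and the definition of the integrality gap. The setup is not really an obstacle so much as a bookkeeping check: one needs $\mathcal{M}(\mathcal{G})$ to be closed under replacing the cost function by an arbitrary nonnegative rational $w$, which is immediate from Definition~\ref{Def: Integrality Gap} (\emph{arbitrary} capacities are allowed), and one needs the LP constraints to be independent of edge costs, which is visible from the LP formulation in Section~\ref{section: preliminaries}. With these in hand, the proof is essentially a three-line chain: feasibility LP $\Rightarrow$ dual $\Rightarrow$ witness instance violating the integrality gap.
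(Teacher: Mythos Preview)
Your proposal is correct and follows essentially the same approach as the paper: both argue by contradiction via linear programming duality (the paper phrases it as Farkas' Lemma, you phrase it as strong duality of the primal $\max \sum_F y_F$), obtain nonnegative edge weights that lower-bound every integer multicut while making the fractional cost strictly small, and interpret these weights as a new cost function witnessing an integrality gap larger than $\alpha$. The only cosmetic difference is that the paper starts from the equality-constrained system $\sum_F y_F = 1$ and reduces to the $\geq 1$ system before applying Farkas, whereas you go directly to the maximization LP; the resulting dual witnesses are identical.
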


\begin{proof} 
Suppose the statement does not hold. Then the following linear system \eqref{eq: LP feasibility of random distribution} is infeasible.

\begin{equation}\label{eq: LP feasibility of random distribution}
\begin{aligned}
    \sum_{F\in \mathcal{F}} y_{F} &= 1\\ 
    \sum_{\substack{F\in \mathcal{F}\\ e\in F}} y_{F} &\leq \alpha \cdot x(e) 
        \quad\quad \forall e\in E(G)\\
    y_{F} &\geq 0
\end{aligned}
\end{equation}

\noindent This implies that the following system \eqref{eq: extended LP feasibility of random distribution} is infeasible as well. The reason is that if the system below is feasible, then we can scale down the feasible solution appropriately and obtain a feasible solution for the system above.

\begin{subequations}\label{eq: extended LP feasibility of random distribution}
\begin{align}
\sum_{F\in \mathcal{F}} y_{F} &\geq 1 
    \label{eq: sum yf>=1}\\
\sum_{\substack{F\in \mathcal{F} \\ e\in F}} y_{F} &\leq \alpha \cdot x(e) 
    \quad \forall\, e \in E(G) \nonumber\\
y_{F} &\geq 0 
    \quad \forall\, F \in \mathcal{F} \nonumber
\end{align}
\end{subequations}

\noindent By reversing the inequality \eqref{eq: sum yf>=1}, we obtain that the following system \eqref{eq: extended LP feasibility of random distribution reversing yf>=1} is also infeasible:

\begin{equation}\label{eq: extended LP feasibility of random distribution reversing yf>=1}
    \begin{aligned}
    \sum_{F\in \mathcal{F}} (-y_{F}) &\leq -1, \\
    \sum_{\substack{F\in \mathcal{F}\\ e\in F}} y_{F} &\leq \alpha \cdot x(e) \quad \forall\, e \in E(G), \\
    y_{F} &\geq 0 \quad \forall\, F \in \mathcal{F}.
\end{aligned}
\end{equation}

\noindent Now, we use the following variant of Farkas Lemma (See~\cite{SchrijverFarkasLemma} for a proof).
\begin{Lemma}\label{Farkas Lemma}
    $\{x\in \mathbb{R}^{n}|Ax\leq b,x\geq 0\}=\emptyset$ iff there exists a vector $u$ such that $A^{T} u \geq 0,u \geq 0$ and $b^{T} u <0$. 
\end{Lemma}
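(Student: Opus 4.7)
The plan is to prove both directions of the stated equivalence, with the easy direction first and then reducing the hard direction to a standard form of Farkas' Lemma (which can be taken as known, or proved separately via the separating hyperplane theorem).

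First I would dispatch the straightforward ``if'' direction. Suppose a vector $u \geq 0$ exists with $A^T u \geq 0$ and $b^T u < 0$. For any candidate $x \geq 0$ with $Ax \leq b$, multiplying the inequality $Ax \leq b$ on the left by $u^T \geq 0$ (which preserves the inequality componentwise and hence after summation) gives $u^T A x \leq u^T b < 0$. But $u^T A = (A^T u)^T \geq 0$ and $x \geq 0$, so $u^T A x \geq 0$, a contradiction. Thus no feasible $x$ exists.

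For the ``only if'' direction, the approach is to reduce $Ax \leq b,\, x \geq 0$ to a standard equality form by introducing a nonnegative slack vector $s \in \mathbb{R}^m$, turning the system into $[A \mid I]\begin{bmatrix} x \\ s \end{bmatrix} = b$ with $\begin{bmatrix} x \\ s \end{bmatrix} \geq 0$. The two systems are feasible together. I would then invoke the classical Farkas Lemma in equality form: $\{z \geq 0 : Mz = b\} = \emptyset$ iff there exists $u$ with $M^T u \geq 0$ and $b^T u < 0$. Applied with $M = [A \mid I]$, the condition $M^T u \geq 0$ decomposes as $A^T u \geq 0$ together with $I^T u = u \geq 0$, which is exactly the conclusion we want. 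Since the slack reformulation is feasible iff the original system is, the infeasibility assumption yields the desired $u$.

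The main obstacle, if one wishes to be fully self-contained, is proving the classical equality-form Farkas Lemma itself, since the reduction step above is routine. I would handle this by appealing to the separating hyperplane theorem applied to the closed convex cone $C = \{Mz : z \geq 0\} \subseteq \mathbb{R}^m$ (closedness follows because $C$ is finitely generated). If $b \notin C$, a strict separator yields a vector $u$ with $u^T y \geq 0$ for all $y \in C$ and $u^T b < 0$; the first condition specializes to $u^T M z \geq 0$ for all $z \geq 0$, which forces $M^T u \geq 0$. This would complete the argument, but since the paper cites Schrijver for this fact, I would likely just reference it rather than reprove it, keeping the appendix focused on the slack-variable reduction above.
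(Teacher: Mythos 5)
Your proof is correct, but note that the paper does not actually prove this lemma: it is stated as a known variant of Farkas' Lemma and dispatched with a citation to Schrijver, so there is no in-paper argument to compare against. Your two directions are both sound --- the ``if'' direction by left-multiplying $Ax \leq b$ by $u^T$ and comparing signs, and the ``only if'' direction by the slack-variable reduction to $[A \mid I]z = b$, $z \geq 0$ followed by the equality-form Farkas Lemma, whose condition $M^T u \geq 0$ correctly decomposes into $A^T u \geq 0$ and $u \geq 0$. The one point where your sketch leans on nontrivial background is the closedness of the finitely generated cone $C = \{Mz : z \geq 0\}$, which is needed before the separating hyperplane theorem applies and is essentially equivalent in difficulty to Farkas itself; since you (like the paper) would ultimately cite Schrijver for the classical statement, this is not a gap in context, but if you wanted the appendix fully self-contained you would need to prove that closedness (e.g.\ by a Carath\'eodory-type reduction to finitely many simplicial subcones). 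In short: your derivation buys a self-contained reduction that the paper simply outsources to the literature.
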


\noindent For a feasible multicut $F$, let $\chi_F \in \{0,1\}^E$ denote its indicator vector. By Lemma~\ref{Farkas Lemma}, there exists $u\geq 0$ and $c \geq 0$ such that $c^{T}\chi_{F}-u\geq 0$ for all $F\in \mathcal{F}$ and $-u + \alpha \cdot c^{T} x<0$. This means that $c^{T}\chi_{F}\geq u$ for all $F\in \mathcal{F}$, and $\alpha \cdot c^{T}x < u$. Thus, $c^{T}\chi_{F}\geq u>\alpha \cdot c^{T}x$ for all $F\in\mathcal{F}$. Therefore with respect to the cost function $c$, $\OPT_{LP}\leq \frac{u}{\alpha}$ and $\OPT_{IP}>u$. This implies that integrality gap of the multicut instance $M$ is $> \alpha$, a contradiction.
\end{proof}

\noindent To connect this with the notion of small diameter decompositions, we now give the formal definition.

\begin{Definition}[Small Diameter Decomposition (SDD)]
Given an unweighted graph $G$, an integer parameter $k \in \mathbb{N}$, and a probability parameter $0 < p < 1$, the small diameter decomposition (SDD) problem asks whether there exists a probability distribution $\mathcal{D} = \{y_{F}\}_{F \in \mathcal{F}_{k}(G)}$ over $\mathcal{F}_{k}(G)$, the family of $k$-diameter decompositions of $G$, such that every edge $e \in E(G)$ is included in a random $k$-diameter decomposition sampled from $\mathcal{D}$ with probability at most $p$, that is,
\[
\sum_{\substack{F \in \mathcal{F}_{k}(G) \\ e \in F}} y_{F} \;\leq\; p \quad \text{for all } e \in E(G).
\]
If such a distribution exists, we denote it by $SDD(G,k,p)$ (See Figure \ref{sdd path example label}). Moreover, a family of graphs $\mathcal{G}$ is said to be $SDD(k,p)$-acceptable if for every $G \in \mathcal{G}$ there exists an $SDD(G,k,p)$.
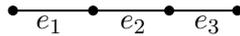
\begin{figure}[H]
    \centering
    \tikzset{every picture/.style={line width=0.75pt}} 

\begin{tikzpicture}[x=0.75pt,y=0.75pt,yscale=-1,xscale=1]

\draw  [fill={rgb, 255:red, 0; green, 0; blue, 0 }  ,fill opacity=1 ] (164,191) .. controls (164,189.9) and (164.9,189) .. (166,189) .. controls (167.1,189) and (168,189.9) .. (168,191) .. controls (168,192.1) and (167.1,193) .. (166,193) .. controls (164.9,193) and (164,192.1) .. (164,191) -- cycle ;
\draw  [fill={rgb, 255:red, 0; green, 0; blue, 0 }  ,fill opacity=1 ] (204,191) .. controls (204,189.9) and (204.9,189) .. (206,189) .. controls (207.1,189) and (208,189.9) .. (208,191) .. controls (208,192.1) and (207.1,193) .. (206,193) .. controls (204.9,193) and (204,192.1) .. (204,191) -- cycle ;
\draw  [fill={rgb, 255:red, 0; green, 0; blue, 0 }  ,fill opacity=1 ] (242,191) .. controls (242,189.9) and (242.9,189) .. (244,189) .. controls (245.1,189) and (246,189.9) .. (246,191) .. controls (246,192.1) and (245.1,193) .. (244,193) .. controls (242.9,193) and (242,192.1) .. (242,191) -- cycle ;
\draw  [fill={rgb, 255:red, 0; green, 0; blue, 0 }  ,fill opacity=1 ] (276,191) .. controls (276,189.9) and (276.9,189) .. (278,189) .. controls (279.1,189) and (280,189.9) .. (280,191) .. controls (280,192.1) and (279.1,193) .. (278,193) .. controls (276.9,193) and (276,192.1) .. (276,191) -- cycle ;
\draw    (168,191) -- (278,191) ;

\draw (176,192.4) node [anchor=north west][inner sep=0.75pt]    {$e_{1}$};
\draw (218,192.4) node [anchor=north west][inner sep=0.75pt]    {$e_{2}$};
\draw (255,192.4) node [anchor=north west][inner sep=0.75pt]    {$e_{3}$};

\end{tikzpicture}
    \caption{$G$ is a simple path with $3$ edges, and $k=2$. Let $F_{1}=\{e_{1},e_{2}\},F_{2}=\{e_{2}\},F_{3}=\{e_{2},e_{3}\},F_{4}=\{e_{1},e_{3}\}$. One can see that $\mathcal{F}_{2}(G)=\{F_{1},F_{2},F_{3},F_{4},E(G)\}$. Let $y_{F_{1}}=y_{F_{3}}=y_{E(G)}=0$, and $y_{F_{2}}=y_{F_{4}}=\frac{1}{2}$. This distribution is a $SDD(G,2,\frac{1}{2})$.}
    \label{sdd path example label}
\end{figure}
\end{Definition}

\noindent In the following Theorem~\ref{relation integrality gap and SDD}, we make explicit the relation between the integrality gap $\alpha_{\mathcal{M}(\mathcal{G})}$ and the existence of suitable SDDs for graph families $\mathcal{G}$ closed under minors and edge subdivisions. The forward implication is a direct consequence of Theorem~\ref{Theorem: the existence of uniform random distribution}, which itself follows straightforwardly from the work of Carr and Vempala~\cite{VempalaConvexCombination} in the context of the minimum multicut problem, and is therefore not our contribution. The backward implication similarly relies on the simple subdivision technique used in~\cite{tardos1993improved}, and again is not part of our new contributions. In this paper, we make use only of the forward implication of Theorem~\ref{relation integrality gap and SDD}, but we also include the proof of the backward implication for completeness.

\begin{Theorem}\label{relation integrality gap and SDD}
    Let $\mathcal{G}$ be a family of graphs closed under minors and edge subdivisions, and let $\alpha$ be a parameter. Then
    \[
    \alpha_{\mathcal{M}(\mathcal{G})} \leq \alpha 
    \quad \iff \quad 
    \forall G \in \mathcal{G},\ \forall w \in \mathbb{N},\ \exists\, SDD\!\left(G,2w,\tfrac{\alpha}{2w}\right).
    \]
\end{Theorem}

\begin{proof}
Forward direction. Assume $\alpha_{\mathcal{M}(\mathcal{G})}\leq \alpha$, and let $G \in \mathcal{G}$ be arbitrary and $w \in \mathbb{N}$. We define a multicut instance $M$ on $G$ and apply Theorem~\ref{Theorem: the existence of uniform random distribution}. Let 
\[
S = \{(u, v) \in V(G) \times V(G) \mid d_{G}(u, v) \geq 2w \}
\]
be the set of source--sink pairs, and assign unit costs to all edges. Define the fractional solution $x$ by setting $x(e) = \tfrac{1}{2w}$ for all $e \in E(G)$. This is easily seen to be a feasible solution to the LP relaxation \eqref{LP of multicut}. By Theorem~\ref{Theorem: the existence of uniform random distribution}, there exists a probability distribution $y$ over feasible multicuts of $M$ such that each edge $e \in E(G)$ is cut with probability at most
\[
\alpha_{\mathcal{M}(\mathcal{G})} \cdot \tfrac{1}{2w} \;\leq\; \tfrac{\alpha}{2w}.
\]
Finally, note that a set of edges is a feasible multicut for this instance if and only if it defines a $2w$-diameter decomposition of $G$. This yields the desired $SDD(G,2w,\tfrac{\alpha}{2w})$.\\

\noindent Backward direction. Assume that for every $G \in \mathcal{G}$ and $w \in \mathbb{N}$ there exists $SDD(G,2w,\tfrac{\alpha}{2w})$. Let $M$ be an arbitrary instance of the minimum multicut problem on $G \in \mathcal{G}$ with edge costs $c : E(G) \to \mathbb{Z}_{+}$ and source--sink pairs $\{(s_i,t_i)\}_{i=1}^k$. We will show that
\[
\frac{\OPT_{IP}(M)}{\OPT_{LP}(M)} \leq \alpha.
\]

\noindent Let $x = \{x_{e}\}_{e \in E(G)}$ be an optimal fractional solution to the LP relaxation \eqref{LP of multicut}. Since LP solutions are rational, we may assume $x_e \in \mathbb{Q}$ for all $e$. Define the support of $x$ as
\[
A = \{ e \in E(G) \mid x_e > 0 \}.
\]
Choose $w \in \mathbb{N}$ such that $2w x_e$ is an integer for every $e \in A$. Construct a new graph $S(G,M)$ by contracting every edge $e \notin A$ and replacing each edge $e \in A$ with a path $P_e$ of length $2wx_e$. Since $\mathcal{G}$ is closed under minors and edge subdivisions, we have $S(G,M) \in \mathcal{G}$.  

\noindent By assumption, there exists $SDD(S(G,M),2w,\tfrac{\alpha}{2w})$. This means there is a probability distribution $y$ over $\mathcal{F}_{2w}(S(G,M))$ (the family of edge sets inducing components of diameter at most $2w-1$) such that
\[
\sum_{\substack{F \in \mathcal{F}_{2w}(S(G,M)) \\ e \in F}} y_F \;\leq\; \tfrac{\alpha}{2w}, \quad \forall e \in E(S(G,M)).
\]

\begin{Claim}\label{Integrality Gap Claim SDD->multicut}
For the multicut instance $M$, we have
\[
\frac{\OPT_{IP}(M)}{\OPT_{LP}(M)} \leq \alpha.
\]
\end{Claim}
\begin{proof}[{Proof of Claim}]
        \renewcommand{\qedsymbol}{$\blacktriangle$}
For each \( F \in \mathcal{F}_{2w}(S(G, M)) \), define
\[
g(F) = \left\{ e \in E(G)\;|\;  F\cap E(P_{e})\neq \varnothing \right\}.
\]

\noindent\( g(F) \) is a feasible multicut for the instance \( M \). The reason is as follows. Suppose, for contradiction, that there exists a source-sink pair \( (s_1, t_1) \) and a path \( P \in \mathcal{P}_1 \) connecting them in \( G\setminus g(F) \) such that \( E(P) \cap g(F) = \varnothing \). Since the LP solution $x$ satisfies \( \sum_{e \in E(P)} x_e \geq 1 \), the corresponding path \( P' \) in \( S(G, M) \) (formed by replacing each \( e \in E(P) \) with the path \( P_e \)) has length at least \( 2w \). The assumption \( E(P) \cap g(F) = \varnothing \) implies that \( E(P') \cap F = \varnothing \), and thus \( s_1 \) and \( t_1 \) remain connected in \( S(G, M) \setminus F \). This contradicts the assumption that \( F \in \mathcal{F}_{2w}(S(G, M)) \), as no connected component in the decomposition can have diameter \( \geq 2w \).

\noindent Now, let \( B = \{ g(F) \mid F \in \mathcal{F}_{2w}(S(G, M)) \} \), and for each \( b \in B \), define
\[
y'_b = \sum_{\substack{F \in \mathcal{F}_{2w}(S(G, M))\\ g(F) = b}} y_F.
\]
Then \( y' = \{ y'_b \}_{b \in B} \) is a probability distribution over multicuts in \( G \). We now analyze the expected cost of a multicut drawn from this distribution. For any edge \( e \in E(G) \), if \( x_e = 0 \), then \( e \) was contracted and does not appear in \( S(G, M) \), so:
\[
\sum_{a \ni e,\; b \in B} y'_b = 0.
\]
If \( e \in A \), then:
\[
\sum_{b \ni e,\; b \in B} y'_b = \sum_{e' \in P_e} \sum_{\substack{F \ni e'\\ F \in \mathcal{F}_{2w}(S(G, M))}} y_F \leq |E(P_e)| \cdot \frac{\alpha}{2w} = \alpha x_e.
\]
Therefore, in expectation, each edge \( e \) appears in a randomly sampled multicut with probability at most \( \alpha x_e \). This implies there exists a multicut \( b \in B \) such that
\[
\sum_{e \in b} c_e \leq \alpha \sum_{e \in E(G)} c_e x_e = \alpha \cdot OPT_{LP}(M),
\]
which completes the proof.
\end{proof}

\end{proof}

\noindent The transition to the SDD framework eliminates the dependence on the specific placement of source--sink pairs and edge costs, which could otherwise be arbitrary, and instead provides a uniform way of analyzing the integrality gap. Theorem~\ref{relation integrality gap and SDD} serves as the bridge between SDDs and the integrality gap. We only make use of the forward direction of Theorem~\ref{relation integrality gap and SDD} in the following.

\subsection{Small Diameter Decomposition for Trees}

\noindent As mentioned earlier, \( \alpha_{\mathcal{M}(\textsc{tree})} = 2 \). By Theorem~\ref{relation integrality gap and SDD}, this implies that for any tree $T$ and any integer $w \in \mathbb{N}$, there exists 
\[
SDD\!\left(T, 2w, \tfrac{2}{2w} = \tfrac{1}{w}\right).
\]
In other words, the family of trees is $SDD(2w,\tfrac{1}{w})$-acceptable for any $w\in \mathbb{N}$. Moreover, without directly appealing to Theorem~\ref{relation integrality gap and SDD}, we can explicitly construct such an $SDD(T,2w,\tfrac{1}{w})$. This explicit construction will serve as a foundation for developing intuition regarding structured small-diameter decompositions in the next \cref{section: A Structural Result for Small Diameter Decompositions}.

\begin{Theorem}\label{Theorem: trees-simple}
Let \( T \) be a tree. Then for every integer \( w \in \mathbb{N} \), there exists 
\[
SDD\!\left(T,2w,\tfrac{1}{w}\right).
\]
Equivalently, there exists a probability distribution $\mathcal{D} = \{y_{F}\}_{F \in \mathcal{F}_{2w}(T)}$ over $\mathcal{F}_{2w}(T)$ such that
\begin{align}\label{eq: bar edges for trees}
    \sum_{\substack{F \in \mathcal{F}_{2w}(T) \\ e \in F}} y_{F} \;\leq\; \frac{1}{w} \quad \forall e \in E(T).
\end{align}
\end{Theorem}

\begin{proof}
Root the tree $T$ at an arbitrary vertex $r \in V(T)$. For $i = 0, \ldots, w-1$, define
\[
F_i = \{\, e \in E(T) \mid d(r,e) = i + k w \text{ for some } k \in \mathbb{Z}_{\geq 0}\,\}.
\]
Set $y_{F_i} = \tfrac{1}{w}$ for each $i = 0, \ldots, w-1$, and $y_F = 0$ otherwise. Note that the sets $F_i$ partition $E(T)$: we have $E(T) = \bigcup_{i=0}^{w-1} F_i$ and $F_i \cap F_j = \varnothing$ for $i \neq j$. Thus,
\[
\sum_{\substack{F \in \mathcal{F}_{2w}(T) \\ e \in F}} y_F 
= \sum_{\substack{i=0 \\ e \in F_i}}^{w-1} y_{F_i} 
= \tfrac{1}{w}, \quad \forall e \in E(T).
\]

\noindent It remains to show that each $F_i$ is a valid $2w$-diameter decomposition. Fix $F_i$, and consider a pair of vertices $(u,v)$ with $d(u,v) \geq 2w$. Let $q$ be the lowest common ancestor of $u$ and $v$. The unique $u$--$v$ path consists of the $u$--$q$ path and the $q$--$v$ path. Since $d(u,v) \geq 2w$, one of these subpaths has length at least $w$. Without loss of generality, suppose $d(q,v) \geq w$, and denote this path by $Q = e_0, e_1, \dots, e_p$. Because $q$ is an ancestor of $v$, we have $d(r,e_i) = d(r,e_{i-1}) + 1$ for $i=1,\dots,p$. Hence there exists some $e_j \in Q$ such that $d(r,e_j) \equiv i \pmod{w}$, i.e., $e_j \in F_i$. Removing $F_i$ therefore separates $u$ and $v$, as required. This shows that $F_i$ defines a $2w$-diameter decomposition, and hence $\mathcal{D}$ is a valid $SDD(T,2w,1/w)$.
\end{proof}

\noindent The $2w$-diameter decompositions $F_{0},\ldots,F_{w-1}$ described in the proof of Theorem~\ref{Theorem: trees-simple} will be useful in the remainder of the paper, so we record a formal definition.

\begin{Definition}\label{Definition: interesting SDDs trees}
Let $w \in \mathbb{N}$, and let $T$ be a tree with a distinguished root vertex $r \in V(T)$. For each \( i = 0, 1, \ldots, w-1 \), define
\[
    F^{i}_{w}(T,r) := \left\{ e \in E(T) \;\middle|\; d_{T}(r,e) \equiv i \pmod{w} \right\},
\]
where $d_{T}(r,e)$ denotes the distance from $r$ to the closer endpoint of $e$. Then $\{F^{i}_{w}(T,r)\}_{i=0}^{w-1}$ forms a partition of $E(T)$. Moreover, each $F^{i}_{w}(T,r)$ defines a $2w$-diameter decomposition of $T$, and the connected component containing the root $r$ has radius at most $i$ from $r$, that is,
\[
    \mathrm{rad}_{F^{i}_{w}(T,r)}(r) \leq i \leq w-1.
\]

\end{Definition}

\noindent
Thus, the $SDD(T,2w,1/w)$ constructed in the proof of Theorem~\ref{Theorem: trees-simple} also satisfies this useful structural property, which we highlight next.

\begin{Observation}\label{Observation: trees} 
For the $2w$-diameter decompositions $F_{0},\ldots,F_{w-1}$ described in the proof of Theorem~\ref{Theorem: trees-simple}, the following properties hold:
\begin{enumerate}
    \item For every $i = 0,\ldots,w-1$, we have 
    \[
    \rad_{F_i}(r) \leq i \leq w-1.
    \]
    Equivalently,
    \[
    \sum_{F : \rad_{F}(r) \leq w-1} y_F = 1.
    \]

    \item For all $1 \leq k \leq w$, we have $\rad_{F_i}(r) \leq k-1$ with probability $\tfrac{k}{w}$. More precisely,
    \[
    \sum_{F_i : \rad_{F_i}(r) \leq k-1} y_{F_i} 
    = \sum_{i=0}^{k-1} y_{F_i} 
    = \frac{k}{w} 
    \;\;\geq\;\; 1 - \frac{2}{2w}(w-k) 
    = 1 - \frac{\alpha_{\mathcal{M}(\textsc{tree})}}{2w}(w-k).
    \]
\end{enumerate}
\end{Observation}

\noindent This observation shows that the $SDD(T,2w,1/w)$ established in Theorem~\ref{Theorem: trees-simple} not only meets the basic edge condition of equation~\eqref{eq: bar edges for trees}, but also enjoys additional structural properties. These strengthened features are captured in the following Corollary \ref{Corollary: trees- interesting properties}. In the next \cref{section: A Structural Result for Small Diameter Decompositions}, we will extend this idea and show that a similar phenomenon holds for families of graphs closed under the $1$-sum operation, a property that the family of trees also satisfies.

\begin{Corollary}\label{Corollary: trees- interesting properties}
    Let $T$ be a tree, and let $r \in V(T)$ be an arbitrary root. For the family $\mathcal{F}_{2w}(T)$ of all $2w$-diameter decompositions, there exists an $SDD(T,2w,1/w)$, i.e., a distribution $\mathcal{D}=\{y_{F}\}_{F \in \mathcal{F}_{2w}(T)}$, with the following additional properties:
    \begin{align*}
        \sum_{F : \rad_F(r) \leq w-1} y_F &= 1, \\
        \sum_{F : \rad_F(r) \leq k-1} y_F &\geq 1 - \frac{\alpha_{\mathcal{M}(\textsc{tree})}}{2w}(w-k) \quad \forall k = 1,\ldots,w.
    \end{align*}
\end{Corollary}

\section{A Structural Result for Small Diameter Decompositions}\label{section: A Structural Result for Small Diameter Decompositions}
We now define the 1-sum operation on graphs, which will play a crucial role going forward.
\begin{Definition}
    Let \( G_1, \ldots, G_l \) be non-empty graphs, and let \( r_i \in V(G_i) \) for \( i = 1, \ldots, l \). The graph \( G^S \) is obtained by taking the disjoint union of \( G_1, G_2, \ldots, G_l \), and identifying the vertices \( r_1, r_2, \ldots, r_l \). We say that \( G^S \) is obtained by performing the \emph{1-sum} of the \( G_i \)'s at the vertices \( r_i \)'s. The vertex \( r = r_1 = \cdots = r_l \) is called the main vertex of \( G^S \). See the following figure \ref{figure: cut vertex operation} for an illustration:

\begin{figure}[H]
\tikzset{every picture/.style={line width=0.75pt}} 

\begin{tikzpicture}[x=0.75pt,y=0.75pt,yscale=-1,xscale=1]

\draw   (283.93,156.36) .. controls (278.92,146.52) and (291.68,129.98) .. (312.44,119.42) .. controls (333.2,108.86) and (354.09,108.29) .. (359.09,118.13) .. controls (364.1,127.98) and (351.33,144.52) .. (330.58,155.07) .. controls (309.82,165.63) and (288.93,166.21) .. (283.93,156.36) -- cycle ;
\draw   (310.04,186.73) .. controls (301.05,180.3) and (304.75,159.74) .. (318.3,140.8) .. controls (331.84,121.85) and (350.11,111.71) .. (359.09,118.13) .. controls (368.08,124.56) and (364.38,145.12) .. (350.83,164.06) .. controls (337.29,183.01) and (319.02,193.15) .. (310.04,186.73) -- cycle ;
\draw   (423.72,172.3) .. controls (416.63,180.77) and (396.41,175.51) .. (378.56,160.55) .. controls (360.71,145.59) and (352,126.6) .. (359.09,118.13) .. controls (366.19,109.67) and (386.41,114.93) .. (404.26,129.89) .. controls (422.1,144.85) and (430.82,163.84) .. (423.72,172.3) -- cycle ;
\draw   (60.81,191) .. controls (49.77,190.94) and (40.9,175.22) .. (41,155.89) .. controls (41.1,136.56) and (50.14,120.94) .. (61.19,121) .. controls (72.23,121.06) and (81.1,136.78) .. (81,156.11) .. controls (80.9,175.44) and (71.86,191.06) .. (60.81,191) -- cycle ;
\draw   (107.81,191) .. controls (96.77,190.94) and (87.9,175.22) .. (88,155.89) .. controls (88.1,136.56) and (97.14,120.94) .. (108.19,121) .. controls (119.23,121.06) and (128.1,136.78) .. (128,156.11) .. controls (127.9,175.44) and (118.86,191.06) .. (107.81,191) -- cycle ;
\draw   (182.81,191) .. controls (171.77,190.94) and (162.9,175.22) .. (163,155.89) .. controls (163.1,136.56) and (172.14,120.94) .. (183.19,121) .. controls (194.23,121.06) and (203.1,136.78) .. (203,156.11) .. controls (202.9,175.44) and (193.86,191.06) .. (182.81,191) -- cycle ;
\draw   (218,141.75) -- (244.6,141.75) -- (244.6,137) -- (262.33,146.5) -- (244.6,156) -- (244.6,151.25) -- (218,151.25) -- cycle ;
\draw  [fill={rgb, 255:red, 0; green, 0; blue, 0 }  ,fill opacity=1 ] (58.19,121) .. controls (58.19,119.34) and (59.53,118) .. (61.19,118) .. controls (62.85,118) and (64.19,119.34) .. (64.19,121) .. controls (64.19,122.66) and (62.85,124) .. (61.19,124) .. controls (59.53,124) and (58.19,122.66) .. (58.19,121) -- cycle ;
\draw  [fill={rgb, 255:red, 0; green, 0; blue, 0 }  ,fill opacity=1 ] (180.19,121) .. controls (180.19,119.34) and (181.53,118) .. (183.19,118) .. controls (184.85,118) and (186.19,119.34) .. (186.19,121) .. controls (186.19,122.66) and (184.85,124) .. (183.19,124) .. controls (181.53,124) and (180.19,122.66) .. (180.19,121) -- cycle ;
\draw  [fill={rgb, 255:red, 0; green, 0; blue, 0 }  ,fill opacity=1 ] (104.19,121) .. controls (104.19,119.34) and (105.53,118) .. (107.19,118) .. controls (108.85,118) and (110.19,119.34) .. (110.19,121) .. controls (110.19,122.66) and (108.85,124) .. (107.19,124) .. controls (105.53,124) and (104.19,122.66) .. (104.19,121) -- cycle ;
\draw  [fill={rgb, 255:red, 0; green, 0; blue, 0 }  ,fill opacity=1 ] (356.09,118.13) .. controls (356.09,116.47) and (357.44,115.13) .. (359.09,115.13) .. controls (360.75,115.13) and (362.09,116.47) .. (362.09,118.13) .. controls (362.09,119.79) and (360.75,121.13) .. (359.09,121.13) .. controls (357.44,121.13) and (356.09,119.79) .. (356.09,118.13) -- cycle ;

\draw (55,101.4) node [anchor=north west][inner sep=0.75pt]    {$r_{1}$};
\draw (102,101.4) node [anchor=north west][inner sep=0.75pt]    {$r_{2}$};
\draw (177,101.4) node [anchor=north west][inner sep=0.75pt]    {$r_{k}$};
\draw (134,150.4) node [anchor=north west][inner sep=0.75pt]  [font=\Large]  {$...$};
\draw (53,147.4) node [anchor=north west][inner sep=0.75pt]    {$G_{1}$};
\draw (99,147.4) node [anchor=north west][inner sep=0.75pt]    {$G_{2}$};
\draw (173,146.4) node [anchor=north west][inner sep=0.75pt]    {$G_{l}$};
\draw (288,137.4) node [anchor=north west][inner sep=0.75pt]    {$G_{1}$};
\draw (311,164.4) node [anchor=north west][inner sep=0.75pt]    {$G_{2}$};
\draw (393.41,148.62) node [anchor=north west][inner sep=0.75pt]    {$G_{l}$};
\draw (356,103.4) node [anchor=north west][inner sep=0.75pt]    {$r=r_{1},\cdots,r_{l}$};
\end{tikzpicture}
    \centering
    \caption{An illustration of the 1-sum operation}
    \label{figure: cut vertex operation}
\end{figure}
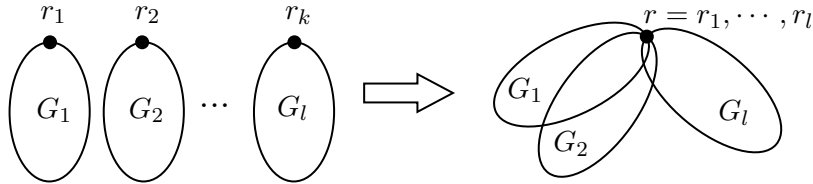

\end{Definition}

Let $\mathcal{G}$ be a family of graphs. We say that $\mathcal{G}$ is closed under the 1-sum operation if for any $G_{1},\ldots,G_{l}\in \mathcal{G}$ and $r_i \in G_i$, the graph obtained by taking 1-sum of $G_{1},\ldots,G_{l}$ at $r_1,r_2,\ldots,r_l$ is a graph in $\mathcal{G}$. Many natural classes of family are closed under the 1-sum operation, such as trees, cactus graphs and planar graphs. Note that 1-sum is a special case of a well known and a more general notion of clique-sums. 
  
\noindent In the remainder of this section, let $w \in \mathbb{N}$ and $0 < p < 1$ be fixed parameters, and assume that the graph family $\mathcal{G}$ is closed under minors, subdivisions, and the $1$-sum operation, and is $SDD(2w,p)$-acceptable. Note that the existence of an $SDD(G,2w,p)$ implies that one can sample a $2w$-diameter decomposition of $G$ in which each edge is included with probability at most $p$.\\

\noindent We note down a few more definitions before stating the main theorems of this section. Let $G$ be a graph and $r\in V(G)$ be an arbitrary vertex. Recall that $\mathcal{F}_{2w}(G)$ denotes the set of all $2w$-diameter decompositions of $G$. For $k \in \{1,\ldots,w\}$, we use $\mathcal{F}^{k}_{2w}(G,r)$ to denote the set of all $2w$-diameter decompositions of $G$ such that every vertex in the connected component containing $r$ is within distance strictly less than $k$ from it. More precisely, 
$$\mathcal{F}^{k}_{2w}(G,r)=\{F\in \mathcal{F}_{2w}(G)~|~rad_{F}(r) < k\}.$$

\noindent We now state a simple Lemma \ref{Lemma: projection is pload}, which will be used in the proofs of Theorems 
\ref{Theorem: rooted (p,2w)-diameter distribution}, 
\ref{Theorem: radius rooted (p,2w)-diameter distribution}, and 
\ref{Theorem: generalized radius rooted (p,2w)-diameter distribution}.

\begin{Lemma}\label{Lemma: projection is pload}
Let \( G \) be a graph and let \( x \) be a $SDD(G,2w,p)$ for \( G \). 
If \( H \) is a subgraph of \( G \), then the distribution
\[
y_{F} = \sum_{\substack{F' \in \mathcal{F}_{2w}(G) \\ F' \cap E(H) = F}} x_{F'}
\quad \text{for each } F \in \mathcal{F}_{2w}(H),
\]
is a $SDD(H,2w,p)$ for \( H \).
\end{Lemma}

\begin{proof}
Fix an edge \( e \in E(H) \). Then
\[
\sum_{e \in F} y_{F} 
  = \sum_{e \in F} \ \sum_{\substack{F' \in \mathcal{F}_{2w}(G) \\ F' \cap E(H) = F}} x_{F'} 
  = \sum_{\substack{F' \in \mathcal{F}_{2w}(G) \\ e \in F'}} x_{F'} 
  \leq p,
\]
where the inequality follows from the fact that \( x \) is a $SDD(G,2w,p)$ for \( G \). Moreover, since \( y \geq 0 \), we verify that \( y \) forms a probability distribution:
\[
\sum_{F \in \mathcal{F}_{2w}(H)} y_{F} 
  = \sum_{F \in \mathcal{F}_{2w}(H)} \ \sum_{\substack{F' \in \mathcal{F}_{2w}(G) \\ F' \cap E(H) = F}} x_{F'}
  = \sum_{F' \in \mathcal{F}_{2w}(G)} x_{F'} 
  = 1.
\]
Thus, \( y \) is a $SDD(H,2w,p)$ for \( H \).
\end{proof}

\begin{Definition}[Projection of a $SDD$]\label{def:SDD-projection}
 Let \( G \) be a graph and let \( x \) be a $SDD(G,2w,p)$ for \( G \). For a subgraph \( H \subseteq G \), the projection of \( x \) onto \( H \) is the distribution
\[
x(H) = \big( y_{F} \big)_{F \in \mathcal{F}_{2w}(H)},
\]
where for each \( F \in \mathcal{F}_{2w}(H) \),
\[
y_{F} = \sum_{\substack{F' \in \mathcal{F}_{2w}(G) \\ F' \cap E(H) = F}} x_{F'}.
\]
By Lemma~\ref{Lemma: projection is pload}, \( x(H) \) is a $SDD(H,2w,p)$ for \( H \).
\end{Definition}

\noindent In Theorem~\ref{Theorem: rooted (p,2w)-diameter distribution}, we show that if $\mathcal{G}$ is closed under the $1$-sum operation and is $SDD(2w,p)$-acceptable, then for any \( G \in \mathcal{G} \) and \( r \in V(G) \), there exists an $SDD(G,2w,p)$ over the family of \( 2w \)-diameter decompositions \( \mathcal{F}_{2w}(G) \) such that, when sampling a decomposition \( F \in \mathcal{F}_{2w}(G) \) from this distribution, we are guaranteed that \( \rad_F(r) \leq w - 1 \), i.e., \( F \in \mathcal{F}_{2w}^{w}(G,r) \). This condition is directly analogous to the first item of Observation~\ref{Observation: trees}.

\begin{Theorem}\label{Theorem: rooted (p,2w)-diameter distribution}
    Suppose that $\mathcal{G}$ is closed under the $1$-sum operation and is $SDD(2w,p)$-acceptable. 
    Let $G \in \mathcal{G}$ and let $r \in V(G)$ be an arbitrary vertex. 
    Then there exists an $SDD(G,2w,p)$, i.e., a distribution 
    $y = \{ y_{F} \mid F \in \mathcal{F}_{2w}(G) \}$, such that
    \[
        \sum_{F \in \mathcal{F}^{w}_{2w}(G,r)} y_{F} = 1,
    \]
    meaning that every sampled $2w$-diameter decomposition $F$ from this distribution 
    satisfies $\rad_{F}(r) \leq w-1$.
\end{Theorem}

\begin{proof}
Let $\mathcal{F}_{2w}=\mathcal{F}_{2w}(G)$ and $\mathcal{F}_{2w}^{w}(r)=\mathcal{F}_{2w}^{w}(G,r)$ for simplicity. It is sufficient to show that the following LP \eqref{LP SDD first theorem} is feasible and has optimal value $0$.

\begin{equation}\label{LP SDD first theorem}
    \begin{aligned}
    \min \sum_{F \in \mathcal{F}_{2w} \setminus \mathcal{F}^{w}_{2w}(r)} y_F & \\
    \sum_{\substack{F \in \mathcal{F}_{2w} \\ e \in F}} y_F \leq & ~p \quad \forall e \in E(G) \\
    \sum_{F \in \mathcal{F}_{2w}} y_F = & 1 \\
    y_F \geq & 0 \quad \forall F \in \mathcal{F}_{2w}
\end{aligned}
\end{equation}

\noindent The above LP \eqref{LP SDD first theorem} is feasible since $\mathcal{G}$ is $SDD(2w,p)$-acceptable. For the sake of contradiction, assume that the optimal value of the above LP is $z>0$. Let $m>\frac{1}{z}$ be a natural number. Let $G_{1},\ldots,G_{m}$ be $m$ disjoint copies of $G$ and $r_{i}$ be the vertex of $G_i$ which corresponds to $r$. Let $G'$ be formed by taking 1-sum of $G_{1},\ldots,G_{m}$ at $r_{1},\ldots,r_{m}$. See the following Figure \ref{Illustration for construction of G' in first theorem} for an illustration:

\begin{figure}[H]
    \centering
    \tikzset{every picture/.style={line width=0.75pt}} 
\begin{tikzpicture}[x=0.75pt,y=0.75pt,yscale=-1,xscale=1]

\draw  [fill={rgb, 255:red, 0; green, 0; blue, 0 }  ,fill opacity=1 ] (260,71.5) .. controls (260,70.12) and (261.12,69) .. (262.5,69) .. controls (263.88,69) and (265,70.12) .. (265,71.5) .. controls (265,72.88) and (263.88,74) .. (262.5,74) .. controls (261.12,74) and (260,72.88) .. (260,71.5) -- cycle ;
\draw   (216.61,75.95) .. controls (235.32,64.73) and (254.75,62.74) .. (260,71.5) .. controls (265.25,80.26) and (254.35,96.46) .. (235.64,107.68) .. controls (216.93,118.9) and (197.5,120.89) .. (192.25,112.13) .. controls (186.99,103.37) and (197.9,87.17) .. (216.61,75.95) -- cycle ;
\draw   (226.41,95.99) .. controls (237.56,77.24) and (253.72,66.28) .. (262.5,71.5) .. controls (271.28,76.72) and (269.36,96.16) .. (258.21,114.91) .. controls (247.06,133.66) and (230.9,144.62) .. (222.12,139.4) .. controls (213.34,134.18) and (215.26,114.74) .. (226.41,95.99) -- cycle ;
\draw   (278.3,112.16) .. controls (262.52,97.09) and (255.44,78.89) .. (262.5,71.5) .. controls (269.56,64.11) and (288.07,70.33) .. (303.85,85.39) .. controls (319.63,100.46) and (326.7,118.66) .. (319.64,126.05) .. controls (312.59,133.44) and (294.08,127.22) .. (278.3,112.16) -- cycle ;

\draw (170,108.4) node [anchor=north west][inner sep=0.75pt]    {$G_{1}$};
\draw (204,145.4) node [anchor=north west][inner sep=0.75pt]    {$G_{2}$};
\draw (314,133.4) node [anchor=north west][inner sep=0.75pt]    {$G_{m}$};
\draw (260.21,118.31) node [anchor=north west][inner sep=0.75pt]    {$\ldots$};
\draw (260,51.4) node [anchor=north west][inner sep=0.75pt]    {$r=r_{1} ,r_{2} ,\ldots,r_{m}$};
\draw (248,160.4) node [anchor=north west][inner sep=0.75pt]    {$G'$};
\end{tikzpicture}
    \label{Illustration for construction of G' in first theorem}
\end{figure}

\noindent Note that \( G' \in \mathcal{G} \) since \( \mathcal{G} \) is closed under the 1-sum operation. Let \( \mathcal{F}_{2w}' =\mathcal{F}_{2w}(G')\) be the set of all \( 2w \)-diameter decompositions of \( G' \). Since $\mathcal{G}$ is $SDD(2w,p)$-acceptable, then $G'$ has a $SDD(G',2w,p)$. Let \( \{ g_{F'} \}_{F' \in \mathcal{F}_{2w}(G')} \) denote such a distribution for \( G' \).
 Let \( G_i = (V_i, E_i) \) and \( \mathcal{F}_{2w}(G_{i}) \) be the set of all \( 2w \)-diameter decompositions of \( G_i \) for \( i = 1, 2, \dots, m \).\footnote{Note that \( \bigcap_{i=1}^{m} V_i = \{ r \} \).} By Lemma~\ref{Lemma: projection is pload}, the projection of \( g \) onto \( G_{i} \) induces a distribution \( g^{i} = g(G_{i}) \) over \( \mathcal{F}_{2w}(G_{i}) \) for $i=1,\ldots,m$. Furthermore, since \( G_i \) is an identical copy of \( G \) and \( g^i \) is a feasible solution to the LP \ref{LP SDD first theorem} mentioned above, we have:
\[
\sum_{F \in \mathcal{F}_{2w}(G_i) \setminus \mathcal{F}_{2w}^{w}(G_i, r)} g^i_F \geq z \quad \text{for} \quad i = 1, 2, \dots, m.
\]

\noindent Recall that \( \mathcal{F}_{2w}^{w}(G_i, r) \) denotes the set of all \( 2w \)-diameter decompositions of \( G_i \) in which the distance of every vertex in the connected component containing \( r \) is at most \( w-1 \) from it. Let \( T_i \) be the event that, when sampling \( F' \in \mathcal{F}_{2w}' \) according to the distribution \( g \), the intersection \( F' \cap E_i \) does not belong to \( \mathcal{F}_{2w}^{w}(G_i, r) \). From the above discussion, it follows that \( \text{Pr}[T_i] \geq z \). Since \( m > \frac{1}{z} \), we have
\[
\sum_{i=1}^{m} \text{Pr}[T_i] \geq z \cdot m > 1.
\]
This implies that the events \( T_1, \dots, T_m \) are not disjoint, and there exist indices \( i, j \) such that \( \text{Pr}[T_i \cap T_j] >0 \). Therefore, there exists a \( F' \in \mathcal{F}_{2w}' \), and vertices \( u \in V_i \), \( v \in V_j \) such that:
\begin{enumerate}
    \item \( g_{F'} > 0 \),
    \item \( u \) and \( v \) are in the connected component containing \( r \) in \( G' \setminus F' \),
    \item the distance of \( u \) and \( v \) from \( r \) is at least \( w \).
\end{enumerate}
But then, the diameter of \( F' \) is at least \( 2w \), which contradicts the fact that \( F' \in \mathcal{F}_{2w}' \). This implies that $z=0$, and completes the proof of the theorem.

\end{proof}

\noindent We say that a graph $G$ together with a vertex $r \in V(G)$ has an $SDD(G,2w,p,r)$ if there exists an $SDD(G,2w,p)$ distribution $y = \{ y_F \}_{F \in \mathcal{F}_{2w}(G)}$ such that
\[
\sum_{F \in \mathcal{F}^{w}_{2w}(G,r)} y_F \;=\; 1.
\]
In this case, every sampled $2w$-diameter decomposition $F$ from $y$ satisfies $\rad_F(r) \leq w-1$.  
We say that a graph class $\mathcal{G}$ is strongly $SDD(2w,p)$-acceptable if for every $G \in \mathcal{G}$ and every $r \in V(G)$, there exists an $SDD(G,2w,p,r)$. 

\noindent In Theorem~\ref{Theorem: rooted (p,2w)-diameter distribution}, we proved that if a graph class $\mathcal{G}$, closed under the $1$-sum operation, is $SDD(2w,p)$-acceptable, then it is also strongly $SDD(w,p)$-acceptable. In Theorem~\ref{Theorem: radius rooted (p,2w)-diameter distribution}, we extend this result to arbitrary radii. For the proof, we additionally assume that $\mathcal{G}$ contains $K_2$, the complete graph on two vertices (i.e., a single edge).

\begin{Theorem}\label{Theorem: radius rooted (p,2w)-diameter distribution}
Suppose that $\mathcal{G}$ is closed under the $1$-sum operation, contains $K_2$, and is strongly $SDD(2w,p)$-acceptable.  
Let $G \in \mathcal{G}$, $r \in V(G)$, and $k \in \{1,\ldots,w\}$.  
Then there exists an $SDD(G,2w,p,r)$ distribution $y = \{y_{F} \mid F \in \mathcal{F}_{2w}(G)\}$ such that
\[
\sum_{F \in \mathcal{F}^{k}_{2w}(G,r)} y_{F} \;\geq\; 1 - p(w-k).
\]
\end{Theorem}

\begin{proof}
Let $\mathcal{F}_{2w}=\mathcal{F}_{2w}(G), \mathcal{F}_{2w}^{w}(r)=\mathcal{F}_{2w}^{w}(G,r)$ and $\mathcal{F}_{2w}^{k}(r)=\mathcal{F}_{2w}^{k}(G,r)$ for simplicity. It is sufficient to show that the following LP \eqref{LP multicut second theorem} is feasible and has optimal value $0$.

\begin{equation}\label{LP multicut second theorem}
    \begin{aligned}
     \min\sum_{F\in \mathcal{F}_{2w}\setminus \mathcal{F}_{2w}^{w}(r)} y_{F}&\\
     \sum_{F\in \mathcal{F}_{2w}^{k}(r)}y_{F}\geq &1-p(w-k)\\
    \sum_{\substack{F\in \mathcal{F}_{2w}\\e\in F}} y_{F}\leq &p \quad\forall e\in E(G)\\
    \sum_{F\in \mathcal{F}_{2w}} y_{F}=&1\\
    y_{F}\geq &0\quad\forall F\in \mathcal{F}_{2w}
\end{aligned} 
\end{equation}

\noindent For now, assume that the LP \eqref{LP multicut second theorem} is feasible and its optimal value is $z>0$. The proof that LP \eqref{LP multicut second theorem} is feasible will also follow from the discussion below. Let $m>\frac{1}{z}$ and $G_{1},\ldots,G_{m}$ be $m$ disjoint copies of $G$. Let $r_{i}$ be the vertex of $G_i$ which corresponds to $r$ and $G'$ be formed by taking 1-sum of $G_{1},\ldots,G_{m}$ at $r_{1},\ldots,r_{m}$. We construct $H$ by adding a path of length $w-k$ to $G'$ at $r$. Let $P=\{e_1,e_2,\ldots,e_{w-k}\}$ be the set of edges on this path, where $e_1=(r,v_1),e_2=(v_1,v_2),\ldots,e_{w-k}=(v_{w-k-1},v)$. See the following Figure \ref{H in the second theorem} for an illustration:

\begin{figure}[H]
    \centering
    \tikzset{every picture/.style={line width=0.75pt}} 

\begin{tikzpicture}[x=0.75pt,y=0.75pt,yscale=-1,xscale=1]

\draw  [fill={rgb, 255:red, 0; green, 0; blue, 0 }  ,fill opacity=1 ] (284,129.5) .. controls (284,128.12) and (285.12,127) .. (286.5,127) .. controls (287.88,127) and (289,128.12) .. (289,129.5) .. controls (289,130.88) and (287.88,132) .. (286.5,132) .. controls (285.12,132) and (284,130.88) .. (284,129.5) -- cycle ;
\draw   (240.61,133.95) .. controls (259.32,122.73) and (278.75,120.74) .. (284,129.5) .. controls (289.25,138.26) and (278.35,154.46) .. (259.64,165.68) .. controls (240.93,176.9) and (221.5,178.89) .. (216.25,170.13) .. controls (210.99,161.37) and (221.9,145.17) .. (240.61,133.95) -- cycle ;
\draw   (250.41,153.99) .. controls (261.56,135.24) and (277.72,124.28) .. (286.5,129.5) .. controls (295.28,134.72) and (293.36,154.16) .. (282.21,172.91) .. controls (271.06,191.66) and (254.9,202.62) .. (246.12,197.4) .. controls (237.34,192.18) and (239.26,172.74) .. (250.41,153.99) -- cycle ;
\draw   (302.3,170.16) .. controls (286.52,155.09) and (279.44,136.89) .. (286.5,129.5) .. controls (293.56,122.11) and (312.07,128.33) .. (327.85,143.39) .. controls (343.63,158.46) and (350.7,176.66) .. (343.64,184.05) .. controls (336.59,191.44) and (318.08,185.22) .. (302.3,170.16) -- cycle ;
\draw  [fill={rgb, 255:red, 0; green, 0; blue, 0 }  ,fill opacity=1 ] (283,104.5) .. controls (283,103.12) and (284.12,102) .. (285.5,102) .. controls (286.88,102) and (288,103.12) .. (288,104.5) .. controls (288,105.88) and (286.88,107) .. (285.5,107) .. controls (284.12,107) and (283,105.88) .. (283,104.5) -- cycle ;
\draw  [fill={rgb, 255:red, 0; green, 0; blue, 0 }  ,fill opacity=1 ] (283,67.5) .. controls (283,66.12) and (284.12,65) .. (285.5,65) .. controls (286.88,65) and (288,66.12) .. (288,67.5) .. controls (288,68.88) and (286.88,70) .. (285.5,70) .. controls (284.12,70) and (283,68.88) .. (283,67.5) -- cycle ;
\draw  [fill={rgb, 255:red, 0; green, 0; blue, 0 }  ,fill opacity=1 ] (283,44.5) .. controls (283,43.12) and (284.12,42) .. (285.5,42) .. controls (286.88,42) and (288,43.12) .. (288,44.5) .. controls (288,45.88) and (286.88,47) .. (285.5,47) .. controls (284.12,47) and (283,45.88) .. (283,44.5) -- cycle ;
\draw    (285.5,44.5) -- (285.5,70) ;
\draw    (285.5,104.5) -- (286.5,127) ;

\draw (194,166.4) node [anchor=north west][inner sep=0.75pt]    {$G_{1}$};
\draw (228,203.4) node [anchor=north west][inner sep=0.75pt]    {$G_{2}$};
\draw (338,191.4) node [anchor=north west][inner sep=0.75pt]    {$G_{m}$};
\draw (284.21,176.31) node [anchor=north west][inner sep=0.75pt]    {$\ldots$};
\draw (290,113.4) node [anchor=north west][inner sep=0.75pt]    {$r=r_{1}=r_{2}= \ldots=r_{m}$};
\draw (278,224.4) node [anchor=north west][inner sep=0.75pt]    {$H$};
\draw (282,25.4) node [anchor=north west][inner sep=0.75pt]    {$v$};
\draw (283.57,93.88) node [anchor=north west][inner sep=0.75pt]  [rotate=-270.23]  {$\ldots$};

\end{tikzpicture}
    \caption{Illustration of the construction of $H$}
    \label{H in the second theorem}
\end{figure}
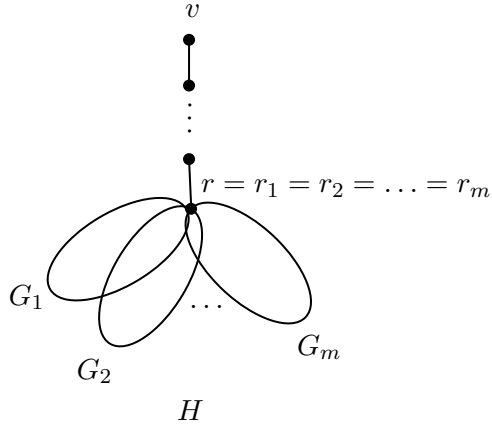

\noindent Since $\mathcal{G}$ is closed under the $1$-sum operation and $K_2 \in \mathcal{G}$, we conclude that $H \in \mathcal{G}$. 
By the assumption that $\mathcal{G}$ is strongly $SDD(2w,p)$-acceptable, there exists an $SDD(H,2w,p,r)$ distribution 
\[
x = \{x_{F'} \mid F' \in \mathcal{F}_{2w}(H)\},
\]
with properties
\[
\sum_{F' \in \mathcal{F}^{w}_{2w}(H,r)} x_{F'} = 1
\quad \text{and} \quad
\sum_{\substack{F' \in \mathcal{F}_{2w}(H) \\ e \in F'}} x_{F'} \leq p \quad \text{for all } e \in E(H).
\]

\noindent Let $A=\{F' \in \mathcal{F}^{w}_{2w}(H,v)~|~F' \cap E(P) \neq \varnothing \}$ and $B=\{F' \in \mathcal{F}^{w}_{2w}(H,v)~|~F' \cap E(P) =\varnothing \}$. Let $A_{i}=\{F' \in  A~|~e_{i}\in F'\}$ for $i=1,\ldots,w-k$. Note that $\sum_{F' \in A}x_{F'}+\sum_{F' \in B}x_{F'}=1$. Since $\sum_{F' \in A_{i}} x_{F'} \leq p$, and $A=\cup_{i=1}^{w-k}A_{i}$, it follows that: 

$$\sum_{F'\in A} x_{F'} \leq (w-k)p ~~\Rightarrow~~\sum_{F' \in B}x_{F'}\geq 1-(w-k)p.$$

\noindent Let \( G_i = (V_i, E_i) \) and \( \mathcal{F}_{2w}(G_i) \) be the set of all \( 2w \)-diameter decompositions of \( G_i \) for \( i= 1, 2, \dots, m \). Recall that \( \mathcal{F}_{2w}^{w}(G_i, r),\mathcal{F}_{2w}^{k}(G_i, r) \) is the set of \( 2w \)-diameter decompositions of \( G_i \) in which the distance of every vertex in the connected component containing \( r \) is at most $w-1$ and $k-1$ from it, respectively. The next claim shows that the projection of the distribution \( x \) onto \( G_i \), 
denoted \(y^{i} = x(G_{i}) \), yields a feasible solution to the LP~\eqref{LP multicut second theorem}.

\begin{Claim}\label{claim: y_j feasible}
The distribution \( y^{i} \) is a feasible solution to the 
LP~\eqref{LP multicut second theorem}.
\end{Claim}

\begin{proof}[{Proof of Claim}]
        \renewcommand{\qedsymbol}{$\blacktriangle$}
 Lemma \ref{Lemma: projection is pload} implies $y^{i}$ is a $SDD(G_{i},2w,p)$ for $G_{i}$. Let $B_{i}=\{F'\cap E_i~|~F'\in B\}$. Observe that $B_{i}\in \mathcal{F}_{2w}(G_i)$. Let $F\in B_{i}$. Then there exists $F'\in B$ such that $F=F'\cap E_i$. Furthermore, for any $F' \in B$, we have $F' \in \mathcal{F}_{2w}^{w}(H,v)$ and $F'\cap E(P)=\varnothing$. Hence we can conclude that $F \in \mathcal{F}^{k}_{2w}(G_i,r)$. Thus, $$\sum_{F\in \mathcal{F}^{k}_{2w}(G_i,r)} y^{i}_{F} \geq \sum_{F\in B_{i}} y^{i}_{F}=\sum_{F\in B_{i}}~~\sum_{\substack{F'\in B\\F'\cap E_i=F}} x_{F'}=\sum_{F'\in B} x_{F'}\geq 1-(w-k) \cdot p.$$

\end{proof}

\noindent Since $G_i$ is an identical copy of $G$ and $\mathcal{F}_{2w}(G_{i})$ is also an identical copy of $\mathcal{F}_{2w}=\mathcal{F}_{2w}(G)$, Claim~\ref{claim: y_j feasible} shows that $y^{i}$ is a feasible solutions for LP~\eqref{LP multicut second theorem} for $G_i$, it follows that LP~\eqref{LP multicut second theorem} is feasible. Since $z$ is the optimal solution of LP~\eqref{LP multicut second theorem}, for each $y^{i}$, we have:

$$\sum_{F\in \mathcal{F}_{2w}(G_i)\setminus \mathcal{F}_{2w}^{w}(G_i,r)} y^{i}_{F}\geq z.$$

\noindent This implies that,
$$\sum_{i=1}^{m}~~\sum_{F\in \mathcal{F}_{2w}(G_i)\setminus \mathcal{F}_{2w}^{w}(G_i,r)} y^{i}_{F} \geq mz >1.$$ 
Using same argument as in the proof of Theorem~\ref{Theorem: rooted (p,2w)-diameter distribution}, we can show that there exists $1\leq i\neq j\leq m$ and $F' \in \mathcal{F}_{2w}^{w}(H,v)$ such that $$y_{F'} >0,~F' \cap E_i \notin \mathcal{F}^{w}_{2w}(G_{i},r)~\text{and}~F' \cap E_j \notin \mathcal{F}^{w}_{2w}(G_j,r).$$ This means that there exists a vertex $a \in V_i,b\in V_j$ such that the distance of $a$ and $b$ from $r$ is  at least $w$, and they are both included in the connected component containing $r$ in $H-F'$. Hence $u$ and $v$ are at least $2w$ distance apart, and this contradicts the fact that $F'$ is a $2w$-diameter decomposition. Hence $z=0$ and this completes the proof of the theorem.

\end{proof}

So far, we have shown that SDDs for any graph class closed under the $1$-sum operation are consistent with those constructed for trees in Observation \ref{Observation: trees}. We now state a simple Lemma~\ref{lemma:upperbound}, which will be useful in proving the lower bound on the integrality gap for cactus graphs in \cref{sec:proof_main_thm}. From this point onward, by Theorem~\ref{Theorem: rooted (p,2w)-diameter distribution} and Theorem~\ref{Theorem: radius rooted (p,2w)-diameter distribution}, we may restrict our attention to $\mathcal{F}_{2w}^{w}(G,r)$ instead of $\mathcal{F}_{2w}(G)$, for a given graph $G$ and vertex $r\in V(G)$.
\begin{Lemma} \label{lemma:upperbound}
    Let $G\in \mathcal{G}$ and let $r\in V(G)$ be an arbitrary vertex.  
    Let $P$ be any shortest path of length $w$ starting at $r$, and denote its other endpoint by $r'$.  
    If $y=\{y_{F}~|~F\in \mathcal{F}_{2w}^{w}(G,r)\}$ is an $SDD(G,2w,p,r)$, then:
    \[
    F\cap E(P)\neq \varnothing \quad \text{for all } F\in \mathcal{F}_{2w}^{w}(G,r),
    \]
    and
    \[
    \sum_{\substack{F \in \mathcal{F}_{2w}^{w}(G,r) \\ |F\cap E(P)|\geq 2}} y_{F}\;\leq\; (w \cdot p)-1.
    \]
\end{Lemma}

\begin{proof}
    If there exists $F\in \mathcal{F}_{2w}^{w}(G,r)$ such that $F \cap E(P)=\varnothing$, then $r,r'$ are within the same connected component in $G- F$, which contradicts the fact that $F \in \mathcal{F}_{2w}^{w}(G,r)$. Let, 
    $$A=\{F\in \mathcal{F}_{2w}^{w}(G,r)~|~|F\cap E(P)|\geq 2 \}~\text{and}~B=\{F\in \mathcal{F}_{2w}^{w}(G,r)~|~|F\cap E(P)|=1\}.$$ 
    Note that $A,B$ forms a partition of $\mathcal{F}_{2w}^{w}(G,r)$. Using the definition of $A$ and $B$, and the fact that $\sum_{F\in A} y_{F} + \sum_{F\in B} y_{F}=1$, we can derive the statement of the theorem as follows:
    $$\sum_{F\in A} y_{F}+1= 2 \cdot \sum_{F\in A} y_{F} + \sum_{F\in B} y_{F} \leq \sum_{e\in E(P)}~~\sum_{\substack{F\in \mathcal{F}_{2w}^{w}(G,r)\\e\in F}} y_{F} \leq~\sum_{e\in E(P)}p~\leq~w \cdot p.$$ 
    Note that the first inequality can be derived by showing that $y_{F}$ appears at least twice in the right hand side if $F\in A$, and exactly once if $F\in B$.\end{proof}

\noindent We are now equipped to prove the lower bound on the integrality gap for cactus graphs, which will be the focus of the next \cref{sec:proof_main_thm}.

\section{$\frac{20}{9}$ Lower Bound For Cactus Graphs}\label{sec:proof_main_thm}

\noindent We are now ready to prove our first theorem. Suppose that $w$ is a fixed even integer. We will apply the tools developed in the previous sections to the family of cactus graphs.
 Let \( \mathcal{G} \) be the family of cactus graphs, and define \( \alpha = \alpha_{\mathcal{M}(\mathcal{G})} \). Since $\mathcal{G}$ is closed under minors and subdivisions, Theorem~\ref{relation integrality gap and SDD} implies that $\mathcal{G}$ is $SDD\!\left(2w,\tfrac{\alpha}{2w}\right)$-acceptable. Let \( p = \tfrac{\alpha}{2w} \). Since \( \mathcal{G} \) is also closed under the 1-sum operation, Theorem~\ref{Theorem: rooted (p,2w)-diameter distribution} ensures that \( \mathcal{G} \) is strongly $SDD(2w,p)$-acceptable. In Theorem~\ref{Theorem Cactus 20/9}, we show that if \( \mathcal{G} \) is strongly $SDD(2w,p)$-acceptable, then \( w \cdot p \geq \tfrac{10}{9} \). This implies
\[
w \cdot p \;=\; w \cdot \frac{\alpha}{2w} \;\geq\; \frac{10}{9},
\]
and hence $\alpha \geq \tfrac{20}{9}$.

\begin{Theorem}\label{Theorem Cactus 20/9}
     Let $\mathcal{G}$ be the family of cactus graphs. If $\mathcal{G}$ is strongly $SDD(2w,p)$-acceptable, then $w \cdot p \geq \tfrac{10}{9}$.
\end{Theorem}  

\begin{proof}
Let $H$ be a cycle of length $2w$. Let $r$, $u$, $r'$, and $v$ be four vertices of the cycle in anti-clockwise order, such that $d(r,u) = d(u,r') = d(r',v) = d(v,r)=\frac{w}{2}$.  We construct $G$ from $H$ by attaching paths $u-u'$ and $v-v'$ of length $\frac{w}{2}$ from $u$ and $v$, respectively. We denote the path of length $\frac{w}{2}$ from $r$ to $u$ by $P_u$, from $u$ to $r'$ by $P_a$, from $r'$ to $v$ by $P_b$, and from $v$ to $r$ by $P_v$. We denote the path from $u$ to $u'$ by $P_c$ and the path from $v$ to $v'$ by $P_d$. See the Figure \ref{last figure in lower bound cactus} for an illustration. Note that $G \in \mathcal{G}$.

\begin{figure}[H]
    \centering
\tikzset{every picture/.style={line width=0.75pt}} 

\begin{tikzpicture}[x=0.75pt,y=0.75pt,yscale=-1.75,xscale=1.75]

\draw   (84,160) .. controls (84,141.77) and (98.77,127) .. (117,127) .. controls (135.23,127) and (150,141.77) .. (150,160) .. controls (150,178.23) and (135.23,193) .. (117,193) .. controls (98.77,193) and (84,178.23) .. (84,160) -- cycle ;
\draw  [fill={rgb, 255:red, 0; green, 0; blue, 0 }  ,fill opacity=1 ] (114.33,127) .. controls (114.33,125.53) and (115.53,124.33) .. (117,124.33) .. controls (118.47,124.33) and (119.67,125.53) .. (119.67,127) .. controls (119.67,128.47) and (118.47,129.67) .. (117,129.67) .. controls (115.53,129.67) and (114.33,128.47) .. (114.33,127) -- cycle ;
\draw  [fill={rgb, 255:red, 0; green, 0; blue, 0 }  ,fill opacity=1 ] (113.33,192.67) .. controls (113.33,191.19) and (114.53,190) .. (116,190) .. controls (117.47,190) and (118.67,191.19) .. (118.67,192.67) .. controls (118.67,194.14) and (117.47,195.33) .. (116,195.33) .. controls (114.53,195.33) and (113.33,194.14) .. (113.33,192.67) -- cycle ;
\draw  [fill={rgb, 255:red, 0; green, 0; blue, 0 }  ,fill opacity=1 ] (81.33,160) .. controls (81.33,158.53) and (82.53,157.33) .. (84,157.33) .. controls (85.47,157.33) and (86.67,158.53) .. (86.67,160) .. controls (86.67,161.47) and (85.47,162.67) .. (84,162.67) .. controls (82.53,162.67) and (81.33,161.47) .. (81.33,160) -- cycle ;
\draw  [fill={rgb, 255:red, 0; green, 0; blue, 0 }  ,fill opacity=1 ] (146.67,160) .. controls (146.67,158.53) and (147.86,157.33) .. (149.33,157.33) .. controls (150.81,157.33) and (152,158.53) .. (152,160) .. controls (152,161.47) and (150.81,162.67) .. (149.33,162.67) .. controls (147.86,162.67) and (146.67,161.47) .. (146.67,160) -- cycle ;
\draw    (149.33,160) -- (193.33,160) ;
\draw  [fill={rgb, 255:red, 0; green, 0; blue, 0 }  ,fill opacity=1 ] (193.33,160) .. controls (193.33,158.53) and (194.53,157.33) .. (196,157.33) .. controls (197.47,157.33) and (198.67,158.53) .. (198.67,160) .. controls (198.67,161.47) and (197.47,162.67) .. (196,162.67) .. controls (194.53,162.67) and (193.33,161.47) .. (193.33,160) -- cycle ;
\draw    (84,160) -- (42,160) ;
\draw  [fill={rgb, 255:red, 0; green, 0; blue, 0 }  ,fill opacity=1 ] (39.33,160) .. controls (39.33,158.53) and (40.53,157.33) .. (42,157.33) .. controls (43.47,157.33) and (44.67,158.53) .. (44.67,160) .. controls (44.67,161.47) and (43.47,162.67) .. (42,162.67) .. controls (40.53,162.67) and (39.33,161.47) .. (39.33,160) -- cycle ;

\draw (114,117.4) node [anchor=north west][inner sep=0.75pt]    {$r$};
\draw (112,196.4) node [anchor=north west][inner sep=0.75pt]    {$r'$};
\draw (112,214.4) node [anchor=north west][inner sep=0.75pt]    {$G$};
\draw (87,155.4) node [anchor=north west][inner sep=0.75pt]    {$u$};
\draw (138,155.4) node [anchor=north west][inner sep=0.75pt]    {$v$};
\draw (90.17,170.9) node [anchor=north west][inner sep=0.75pt]    {$P_{a}$};
\draw (90.17,140.9) node [anchor=north west][inner sep=0.75pt]    {$P_{u}$};
\draw (56.17,162.9) node [anchor=north west][inner sep=0.75pt]    {$P_{c}$};
\draw (132.17,170.9) node [anchor=north west][inner sep=0.75pt]    {$P_{b}$};
\draw (132.17,140.9) node [anchor=north west][inner sep=0.75pt]    {$P_{v}$};
\draw (165.17,162.9) node [anchor=north west][inner sep=0.75pt]    {$P_{d}$};

\draw (38,147.4) node [anchor=north west][inner sep=0.75pt]    {$u'$};
\draw (193,147.4) node [anchor=north west][inner sep=0.75pt]    {$v'$};

\end{tikzpicture}
    \label{last figure in lower bound cactus}
\end{figure}
\noindent For the sake of contradiction, assume that $w\cdot p < \frac{10}{9}$. Let $k=\frac{w}{2}$, $\mathcal{F}=\mathcal{F}^{w}_{2w}(G,r)$ and $A=\mathcal{F}^{k}_{2w}(G,r)$. Since $K_2 \in \mathcal{G}$ and $\mathcal{G}$ is closed under 1-sum, we can use Theorem~\ref{Theorem: rooted (p,2w)-diameter distribution} and Theorem~\ref{Theorem: radius rooted (p,2w)-diameter distribution} to conclude that there exists $SDD(G,2w,p,r)$ $y=\{y_{F}~|~F\in \mathcal{F}^{w}_{2w}(G,r)\} $ such that, 

$$\sum_{F \in A} y_F = \sum_{F\in \mathcal{F}^{k}_{2w}(r)}y_{F} \geq 1-(w-k) \cdot p=1- \dfrac{w}{2}\cdot p=1-\dfrac{w \cdot p}{2}.$$

  \noindent Suppose that $F\in A$. Since $d(u,r)=k=\frac{w}{2}$ and $d(v,r)=k=\frac{w}{2}$, we have that $F\cap E(P_{v}) \neq \varnothing$ and $F \cap E(P_{u})\neq \varnothing$. The next claim shows that under the assumption that $w \cdot p < \frac{10}{9}$, there exists $F \in A$ which does not pick any edges from $P_a, P_b, P_c,P_d$. This will lead to a contradiction, as $u'$ and $v'$ are $2w$ distance apart, and since $F$ is a $2w$-diameter decomposition, they should have been separated by $F$. This implies that $w \cdot p \geq \frac{10}{9}$ and completes the proof of the theorem. 
    \begin{Claim}
        There exists $F\in A$ such that $F\cap E(P_{a}),F\cap E(P_{b}),F\cap E(P_{c}), F\cap E(P_{d})=\varnothing$.
    \end{Claim}
    \begin{proof}[{Proof of Claim}]
        \renewcommand{\qedsymbol}{$\blacktriangle$}

        Let $P'_{a},P'_{b}$ be the two paths between $r,r'$ containing $P_{a},P_{b}$ respectively. Also, let $P'_{u},P'_{v}$ be the unique shortest paths starting at $r$ and ending at $u',v'$, respectively. Let $A_{a}=\{F\in A~|~F \cap E(P_{a})\neq \varnothing\}$ and $F\in A_{a}$. This implies that $|F\cap E(P^{'}_{a})|\geq 2$. Since $P'_{a}$ is a shortest path with length $w$ from $r$, using Lemma~\ref{lemma:upperbound}, we have, 
        $$\sum_{F \in \mathcal{F}, |F\cap E(P^{'}_{a})| \geq 2}y_{F}\leq (w \cdot p)-1 ~\Rightarrow~\sum_{F\in A_{a}}y_{F}\leq (w \cdot p)-1.$$
         Similarly, we define, 
         $$A_{b}=\{F\in A~|~F \cap E(P_{b})\neq \varnothing \},A_{c}=\{F\in A~|~F \cap E(P_{c})\neq \varnothing\},A_{d}=\{F\in A~|~F \cap E(P_{d})\neq \varnothing\},$$ and doing the same argument for $P'_{b},P'_{u},P'_{v}$, we obtain,
         $$\sum_{F\in A_{b}}y_{F} \leq (w \cdot p)-1;~\sum_{F\in A_{c}}y_{F} \leq (w \cdot p)-1;~\sum_{F\in A_{d}}y_{F}\leq (w \cdot p)-1.$$ Let $A^{*}=A\setminus(A_{a}\cup A_{b}\cup A_{c} \cup A_{d})$. From the discussion above, it follows that,
         $$\sum_{F\in A^{*}}y_{F}=\sum_{F\in A}y_{F}-\sum_{F\in A_{a}\cup A_{b}\cup A_{c}\cup A_{d}} y_{F} \geq \left(1-\dfrac{w \cdot p}{2}\right)-4 \cdot (w \cdot p-1)=5-\dfrac{9 \cdot w \cdot p}{2}>0.$$
        This shows that $A^{*}\neq \varnothing$ and completes the proof of the claim.
    \end{proof}
    
    \end{proof}
In \cref{sec explicit example 20/9}, we use the construction from 
Theorem~\ref{Theorem Cactus 20/9} to exhibit an explicit instance of the minimum multicut problem 
with 
\(\tfrac{\OPT_{IP}}{\OPT_{LP}} \geq \tfrac{20}{9}\).
In the following \cref{sec construction theorem 16/7}, we improve the lower bound to 
\(\tfrac{16}{7}\).

\section{$\tfrac{16}{7}$ Lower Bound for Cactus Graphs}\label{sec construction theorem 16/7}

In this section, we build upon the ideas developed in the previous sections to improve the lower bound to $\frac{16}{7}$. In the proof of Theorem \ref{Theorem Cactus 20/9}, the witness graph $G$ was obtained as the $1$-sum of a cycle of length $2w$ with two simple paths of length $\tfrac{w}{2}$. That argument invoked Theorem \ref{Theorem: radius rooted (p,2w)-diameter distribution} to guarantee the existence of a suitable $SDD(G,2w,p,r)$. We now strengthen this construction by replacing the two paths with carefully chosen cactus “amplifiers.”

The starting point is that for any cactus $G$ and any $r\in V(G)$ and $k\in\{1,\ldots,w\}$, there exists an $SDD(G,2w,p,r)$ such that a $F$ drawn from this distribution satisfies $\rad_F(r)\le k-1$ with probability at least $1-(w-k)p$ (by Theorem \ref{Theorem: radius rooted (p,2w)-diameter distribution}). Our aim is to leverage components for which this success probability is as small as possible (i.e., the “hardest” attachments), and graft them into the cycle gadget.

\begin{Definition}\label{def:R-G-2w-p-k}
Let $\mathcal{G}$ be a graph family containing $K_2$ and closed under taking minors and subdivisions.
Fix $k\in\{1,\ldots,w\}$.
Define
\[
R^{k}_{(\mathcal{G},2w,p)}
\;=\;
\inf_{\substack{G\in\mathcal{G}\\ r\in V(G)}}\;
\max_{\;y\text{ is an }SDD(G,2w,p,r)}
\left\{
\sum_{F\in \mathcal{F}^{k}_{2w}(G,r)} y_F
\right\}.
\]
By Theorem~\ref{Theorem: radius rooted (p,2w)-diameter distribution}, we always have
\(
R^{k}_{(\mathcal{G},2w,p)} \ge 1-(w-k)p.
\)
Moreover, $R^{k}_{(\mathcal{G},2w,p)} \le k\cdot p$:
indeed, for any $G,r$ choose a shortest $r$–$u$ path $Q$ of length $k$.
If $\rad_F(r)\le k-1$ then $F$ must intersect $E(Q)$; hence for any $SDD(G,2w,p,r)$,
\[
\sum_{F\in \mathcal{F}^{k}_{2w}(G,r)} y_F
\;\le\;
\sum_{\substack{F\in \mathcal{F}^{w}_{2w}(G,r)\\ E(Q)\cap F\neq\varnothing}} y_F
\;\le\;
\sum_{e\in E(Q)} \sum_{\substack{F\in \mathcal{F}^{w}_{2w}(G,r)\\ e\in F}} y_F
\;\le\; k\cdot p.
\]

\noindent Therefore,
\begin{align}\label{ upper bound lower bound for R}
    1-(w-k)p \;\le\; R^{k}_{(\mathcal{G},2w,p)} \;\le\; k\cdot p.
\end{align}
This upper bound will be used in the proof of Theorem~\ref{Theorem Cactus 16/7}.
\end{Definition}

\noindent To proceed, we require a structural extension, namely Theorem
\ref{Theorem: generalized radius rooted (p,2w)-diameter distribution}, 
which generalizes Theorem \ref{Theorem: radius rooted (p,2w)-diameter distribution}. 
For this purpose, we introduce the following Definition \ref{def:general-1-sum}:

\begin{Definition}[Generalized $1$-sum]\label{def:general-1-sum}
Let \( G \) be a non-empty graph, and let \( l \) be a natural number.  
Choose vertices \( u_{1}, \ldots, u_{l} \in V(G) \). For each \( i = 1, \ldots, l \), 
let \( G_{i} \) be a non-empty graph with a distinguished vertex \( r_{i} \in V(G_{i}) \). 
Define \( G^{S^{(0)}} := G \). For \( i = 1, \ldots, l \), construct \( G^{S^{(i)}} \) 
by performing the 1-sum of \( G^{S^{(i-1)}} \) and \( G_{i} \) at the vertices \( u_{i} \) and \( r_{i} \). 
After all steps, we obtain the graph \( G^{S^{(l)}} \), which we denote by \( G^{S(L)} \), where  
\[
L = \{(u_{i}, (G_{i}, r_{i})) \mid i = 1, \ldots, l\}.
\]  

\noindent Informally, \( G^{S(L)} \) is the graph obtained by taking the $1$-sum of \( G \) with the graphs 
\( G_{1}, \ldots, G_{l} \), simultaneously identifying \( u_{i} \) with \( r_{i} \) for all \( i = 1, \ldots, l \). 
See Figure \ref{GS(L) Figure} for an illustration:

\begin{figure}[H]
    \centering
    \tikzset{every picture/.style={line width=0.75pt}} 

\begin{tikzpicture}[x=0.75pt,y=0.75pt,yscale=-1,xscale=1]

\draw   (100,131.5) .. controls (100,119.07) and (117.09,109) .. (138.17,109) .. controls (159.25,109) and (176.33,119.07) .. (176.33,131.5) .. controls (176.33,143.93) and (159.25,154) .. (138.17,154) .. controls (117.09,154) and (100,143.93) .. (100,131.5) -- cycle ;
\draw  [fill={rgb, 255:red, 0; green, 0; blue, 0 }  ,fill opacity=1 ] (117.5,150.83) .. controls (117.5,149.82) and (118.32,149) .. (119.33,149) .. controls (120.35,149) and (121.17,149.82) .. (121.17,150.83) .. controls (121.17,151.85) and (120.35,152.67) .. (119.33,152.67) .. controls (118.32,152.67) and (117.5,151.85) .. (117.5,150.83) -- cycle ;
\draw  [fill={rgb, 255:red, 0; green, 0; blue, 0 }  ,fill opacity=1 ] (110.5,125.83) .. controls (110.5,124.82) and (111.32,124) .. (112.33,124) .. controls (113.35,124) and (114.17,124.82) .. (114.17,125.83) .. controls (114.17,126.85) and (113.35,127.67) .. (112.33,127.67) .. controls (111.32,127.67) and (110.5,126.85) .. (110.5,125.83) -- cycle ;
\draw  [fill={rgb, 255:red, 0; green, 0; blue, 0 }  ,fill opacity=1 ] (173.97,132.58) .. controls (173.97,131.57) and (174.79,130.75) .. (175.8,130.75) .. controls (176.81,130.75) and (177.63,131.57) .. (177.63,132.58) .. controls (177.63,133.6) and (176.81,134.42) .. (175.8,134.42) .. controls (174.79,134.42) and (173.97,133.6) .. (173.97,132.58) -- cycle ;
\draw   (176.33,131.5) .. controls (187.97,127.15) and (203.39,139.63) .. (210.77,159.37) .. controls (218.16,179.11) and (214.71,198.65) .. (203.07,203) .. controls (191.43,207.35) and (176.01,194.87) .. (168.62,175.13) .. controls (161.24,155.39) and (164.69,135.85) .. (176.33,131.5) -- cycle ;
\draw   (111.86,130.31) .. controls (111.08,140.89) and (97.92,148.53) .. (82.46,147.38) .. controls (67.01,146.23) and (55.12,136.73) .. (55.91,126.15) .. controls (56.7,115.57) and (69.86,107.93) .. (85.31,109.08) .. controls (100.76,110.23) and (112.65,119.73) .. (111.86,130.31) -- cycle ;
\draw   (121.17,150.83) .. controls (138.7,154.89) and (149.94,170.99) .. (146.29,186.8) .. controls (142.63,202.6) and (125.45,212.12) .. (107.92,208.07) .. controls (90.39,204.01) and (79.14,187.91) .. (82.8,172.1) .. controls (86.46,156.3) and (103.64,146.78) .. (121.17,150.83) -- cycle ;

\draw (118,118.4) node [anchor=north west][inner sep=0.75pt]  [font=\scriptsize]  {$u_{1} =r_{1}$};
\draw (179,118.4) node [anchor=north west][inner sep=0.75pt]  [font=\scriptsize]  {$u_{3} =r_{3}$};
\draw (95,162.4) node [anchor=north west][inner sep=0.75pt]  [font=\scriptsize]  {$u_{2} =r_{2}$};
\draw (69,120.4) node [anchor=north west][inner sep=0.75pt]    {$G_{1}$};
\draw (101,185.4) node [anchor=north west][inner sep=0.75pt]    {$G_{2}$};
\draw (185,165.4) node [anchor=north west][inner sep=0.75pt]    {$G_{3}$};
\draw (130,131.4) node [anchor=north west][inner sep=0.75pt]    {$G$};

\end{tikzpicture}
    \caption{The construction of $G^{S(L)}$ with 
    $L=\{(u_{1},(G_{1},r_{1})),(u_{2},(G_{2},r_{2})),(u_{3},(G_{3},r_{3}))\}$.}
    \label{GS(L) Figure}
\end{figure}
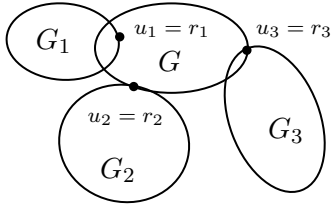
\end{Definition}

\begin{Theorem}\label{Theorem: generalized radius rooted (p,2w)-diameter distribution}
Suppose that $\mathcal{G}$ is closed under the 1-sum operation, contains $K_2$, and is strongly $SDD(2w,p)$-acceptable.  
Let $G \in \mathcal{G}$, $r \in V(G)$, and $k \in \{1,\ldots,w\}$. Let $l$ be a natural number, and let $u_{1}, \ldots, u_{l} \in V(G)$ be arbitrary vertices. For each $i = 1, \ldots, l$, let $G_{i} \in \mathcal{G}$ with a distinguished vertex $r_{i} \in V(G_{i})$. Define
\[
L = \{ (u_{i}, (G_{i}, r_{i})) \mid i = 1, \ldots, l \}.
\]
Then there exists an $SDD(G^{S(L)},2w,p,r)$ distribution 
\(
y = \{ y_{F} \mid F \in \mathcal{F}^{w}_{2w}(G^{S(L)}) \}
\)
for $G^{S(L)}$ such that
\[
\sum_{F \in \mathcal{F}^{k}_{2w}(G^{S(L)}, r)} y_{F} \;\; \geq \;\; R^{k}_{(\mathcal{G},2w,p)}\geq 1-(w-k)p,
\]
and moreover, for each $i = 1, \ldots, l$, the projection $y(G_{i})$ is an $SDD(G_{i},2w,p,r_{i})$ for $G_{i}$.
\end{Theorem}

\begin{proof}

Let \(\mathcal{F}^{w}_{2w}=\mathcal{F}^{w}_{2w}(G^{S(L)},r)\) and 
\(\mathcal{F}^{k}_{2w}=\mathcal{F}^{k}_{2w}(G^{S(L)},r)\).
For each \(i=1,\ldots,l\), define
\[
\mathcal{F}_{i}
=\Bigl\{F\in \mathcal{F}^{w}_{2w}\ \Big|\ F\cap E(G_{i})\in \mathcal{F}^{w}_{2w}(G_{i},r_{i})\Bigr\},
\qquad
\mathcal{F}=\bigcap_{i=1}^{l}\mathcal{F}_{i}.
\]
It suffices to show that the following LP is feasible and has optimal value \(0\):
\begin{equation}\label{LP:3}
\begin{aligned}
\min\ &\sum_{F\in \mathcal{F}^{w}_{2w}\setminus \mathcal{F}} y_{F} \\
\text{s.t.}\quad
&\sum_{F\in \mathcal{F}^{k}_{2w}} y_{F}\ \ge\ R^{k}_{(\mathcal{G},2w,p)},\\
&\sum_{\substack{F\in \mathcal{F}^{w}_{2w}\\ e\in F}} y_{F}\ \le\ p 
\qquad \forall\, e\in E\bigl(G^{S(L)}\bigr),\\
&\sum_{F\in \mathcal{F}^{w}_{2w}} y_{F}=1,\qquad
y_{F}\ge 0\ \ \forall\,F\in \mathcal{F}^{w}_{2w}.
\end{aligned}
\end{equation}

By Theorem~\ref{Theorem: radius rooted (p,2w)-diameter distribution} and by the definition of 
\(R^{k}_{(\mathcal{G},2w,p)}\) (see Definition~\ref{def:R-G-2w-p-k}), LP~\eqref{LP:3} is feasible.
Suppose, for contradiction, that the optimal value of \eqref{LP:3} is \(z>0\).
Let \(m>\tfrac{l^{2}}{z}\).
For each \(i=1,\ldots,l\), let \(G_{i}^{1},\ldots,G_{i}^{m}\) be \(m\) disjoint copies of \(G_{i}\),
and let \(r_{i}^{j}\in V(G_{i}^{j})\) be the vertex corresponding to \(r_{i}\).
Form \(G'_{i}\) by taking the \(1\)-sum of \(G_{i}^{1},\ldots,G_{i}^{m}\) at 
\(r_{i}^{1},\ldots,r_{i}^{m}\), and denote the identified vertex again by \(r_{i}\).
Define
\[
L'=\{(u_{i},(G'_{i},r_{i}))\mid i=1,\ldots,l\}.
\]
Since \(\mathcal{G}\) is closed under the \(1\)-sum operation, we have \(G^{S(L')}\in \mathcal{G}\).
See Figure~\ref{the figure of GSL' and GSL toether} for an illustration:

\begin{figure}[H]
    \centering
    \begin{minipage}{0.48\textwidth}
        \centering
        \tikzset{every picture/.style={line width=0.75pt}} 

\begin{tikzpicture}[x=0.75pt,y=0.75pt,yscale=-1,xscale=1]

\draw   (100,131.5) .. controls (100,119.07) and (117.09,109) .. (138.17,109) .. controls (159.25,109) and (176.33,119.07) .. (176.33,131.5) .. controls (176.33,143.93) and (159.25,154) .. (138.17,154) .. controls (117.09,154) and (100,143.93) .. (100,131.5) -- cycle ;
\draw  [fill={rgb, 255:red, 0; green, 0; blue, 0 }  ,fill opacity=1 ] (117.5,150.83) .. controls (117.5,149.82) and (118.32,149) .. (119.33,149) .. controls (120.35,149) and (121.17,149.82) .. (121.17,150.83) .. controls (121.17,151.85) and (120.35,152.67) .. (119.33,152.67) .. controls (118.32,152.67) and (117.5,151.85) .. (117.5,150.83) -- cycle ;
\draw  [fill={rgb, 255:red, 0; green, 0; blue, 0 }  ,fill opacity=1 ] (110.5,125.83) .. controls (110.5,124.82) and (111.32,124) .. (112.33,124) .. controls (113.35,124) and (114.17,124.82) .. (114.17,125.83) .. controls (114.17,126.85) and (113.35,127.67) .. (112.33,127.67) .. controls (111.32,127.67) and (110.5,126.85) .. (110.5,125.83) -- cycle ;
\draw  [fill={rgb, 255:red, 0; green, 0; blue, 0 }  ,fill opacity=1 ] (173.97,132.58) .. controls (173.97,131.57) and (174.79,130.75) .. (175.8,130.75) .. controls (176.81,130.75) and (177.63,131.57) .. (177.63,132.58) .. controls (177.63,133.6) and (176.81,134.42) .. (175.8,134.42) .. controls (174.79,134.42) and (173.97,133.6) .. (173.97,132.58) -- cycle ;
\draw   (176.33,131.5) .. controls (187.97,127.15) and (203.39,139.63) .. (210.77,159.37) .. controls (218.16,179.11) and (214.71,198.65) .. (203.07,203) .. controls (191.43,207.35) and (176.01,194.87) .. (168.62,175.13) .. controls (161.24,155.39) and (164.69,135.85) .. (176.33,131.5) -- cycle ;
\draw   (111.86,130.31) .. controls (111.08,140.89) and (97.92,148.53) .. (82.46,147.38) .. controls (67.01,146.23) and (55.12,136.73) .. (55.91,126.15) .. controls (56.7,115.57) and (69.86,107.93) .. (85.31,109.08) .. controls (100.76,110.23) and (112.65,119.73) .. (111.86,130.31) -- cycle ;
\draw   (121.17,150.83) .. controls (138.7,154.89) and (149.94,170.99) .. (146.29,186.8) .. controls (142.63,202.6) and (125.45,212.12) .. (107.92,208.07) .. controls (90.39,204.01) and (79.14,187.91) .. (82.8,172.1) .. controls (86.46,156.3) and (103.64,146.78) .. (121.17,150.83) -- cycle ;

\draw (118,118.4) node [anchor=north west][inner sep=0.75pt]  [font=\scriptsize]  {$u_{1} =r_{1}$};
\draw (179,118.4) node [anchor=north west][inner sep=0.75pt]  [font=\scriptsize]  {$u_{3} =r_{3}$};
\draw (95,162.4) node [anchor=north west][inner sep=0.75pt]  [font=\scriptsize]  {$u_{2} =r_{2}$};
\draw (69,120.4) node [anchor=north west][inner sep=0.75pt]    {$G_{1}$};
\draw (101,185.4) node [anchor=north west][inner sep=0.75pt]    {$G_{2}$};
\draw (185,165.4) node [anchor=north west][inner sep=0.75pt]    {$G_{3}$};
\draw (130,131.4) node [anchor=north west][inner sep=0.75pt]    {$G$};

\end{tikzpicture}
    \end{minipage}\hfill
    \begin{minipage}{0.48\textwidth}
        \centering
        \tikzset{every picture/.style={line width=0.75pt}} 

\begin{tikzpicture}[x=0.75pt,y=0.75pt,yscale=-1,xscale=1]

\draw   (120,140.5) .. controls (120,128.07) and (137.09,118) .. (158.17,118) .. controls (179.25,118) and (196.33,128.07) .. (196.33,140.5) .. controls (196.33,152.93) and (179.25,163) .. (158.17,163) .. controls (137.09,163) and (120,152.93) .. (120,140.5) -- cycle ;
\draw  [fill={rgb, 255:red, 0; green, 0; blue, 0 }  ,fill opacity=1 ] (137.5,159.83) .. controls (137.5,158.82) and (138.32,158) .. (139.33,158) .. controls (140.35,158) and (141.17,158.82) .. (141.17,159.83) .. controls (141.17,160.85) and (140.35,161.67) .. (139.33,161.67) .. controls (138.32,161.67) and (137.5,160.85) .. (137.5,159.83) -- cycle ;
\draw  [fill={rgb, 255:red, 0; green, 0; blue, 0 }  ,fill opacity=1 ] (130.5,134.83) .. controls (130.5,133.82) and (131.32,133) .. (132.33,133) .. controls (133.35,133) and (134.17,133.82) .. (134.17,134.83) .. controls (134.17,135.85) and (133.35,136.67) .. (132.33,136.67) .. controls (131.32,136.67) and (130.5,135.85) .. (130.5,134.83) -- cycle ;
\draw  [fill={rgb, 255:red, 0; green, 0; blue, 0 }  ,fill opacity=1 ] (193.97,141.58) .. controls (193.97,140.57) and (194.79,139.75) .. (195.8,139.75) .. controls (196.81,139.75) and (197.63,140.57) .. (197.63,141.58) .. controls (197.63,142.6) and (196.81,143.42) .. (195.8,143.42) .. controls (194.79,143.42) and (193.97,142.6) .. (193.97,141.58) -- cycle ;
\draw   (195.8,141.58) .. controls (207.44,137.23) and (222.86,149.71) .. (230.24,169.45) .. controls (237.62,189.2) and (234.17,208.73) .. (222.53,213.08) .. controls (210.89,217.43) and (195.47,204.96) .. (188.09,185.21) .. controls (180.71,165.47) and (184.16,145.94) .. (195.8,141.58) -- cycle ;
\draw   (131.86,139.31) .. controls (131.08,149.89) and (117.92,157.53) .. (102.46,156.38) .. controls (87.01,155.23) and (75.12,145.73) .. (75.91,135.15) .. controls (76.7,124.57) and (89.86,116.93) .. (105.31,118.08) .. controls (120.76,119.23) and (132.65,128.73) .. (131.86,139.31) -- cycle ;
\draw   (141.17,159.83) .. controls (158.7,163.89) and (169.94,179.99) .. (166.29,195.8) .. controls (162.63,211.6) and (145.45,221.12) .. (127.92,217.07) .. controls (110.39,213.01) and (99.14,196.91) .. (102.8,181.1) .. controls (106.46,165.3) and (123.64,155.78) .. (141.17,159.83) -- cycle ;
\draw   (201,137.97) .. controls (211.8,131.82) and (229.01,141.67) .. (239.45,159.98) .. controls (249.89,178.29) and (249.61,198.13) .. (238.81,204.28) .. controls (228.02,210.44) and (210.8,200.58) .. (200.36,182.27) .. controls (189.92,163.96) and (190.21,144.13) .. (201,137.97) -- cycle ;
\draw   (195.8,141.58) .. controls (201.34,130.46) and (221.12,129.05) .. (239.99,138.45) .. controls (258.87,147.84) and (269.67,164.47) .. (264.14,175.6) .. controls (258.6,186.72) and (238.81,188.12) .. (219.94,178.73) .. controls (201.07,169.34) and (190.26,152.71) .. (195.8,141.58) -- cycle ;
\draw   (141.17,159.83) .. controls (159.16,160.26) and (173.42,173.76) .. (173.04,189.98) .. controls (172.65,206.19) and (157.75,218.99) .. (139.76,218.56) .. controls (121.77,218.13) and (107.5,204.64) .. (107.89,188.42) .. controls (108.28,172.2) and (123.18,159.4) .. (141.17,159.83) -- cycle ;
\draw   (141.17,159.83) .. controls (159.08,158.09) and (174.87,169.77) .. (176.44,185.92) .. controls (178.01,202.06) and (164.76,216.56) .. (146.85,218.3) .. controls (128.94,220.04) and (113.15,208.37) .. (111.58,192.22) .. controls (110.01,176.07) and (123.26,161.57) .. (141.17,159.83) -- cycle ;
\draw   (132.33,133) .. controls (129.75,143.29) and (115.48,148.57) .. (100.45,144.8) .. controls (85.42,141.03) and (75.33,129.64) .. (77.91,119.35) .. controls (80.49,109.06) and (94.77,103.78) .. (109.79,107.55) .. controls (124.82,111.32) and (134.91,122.71) .. (132.33,133) -- cycle ;
\draw   (132.33,133) .. controls (124.48,140.12) and (109.67,136.59) .. (99.26,125.12) .. controls (88.85,113.64) and (86.79,98.56) .. (94.64,91.43) .. controls (102.5,84.31) and (117.31,87.84) .. (127.72,99.32) .. controls (138.12,110.8) and (140.19,125.88) .. (132.33,133) -- cycle ;

\draw (138,127.4) node [anchor=north west][inner sep=0.75pt]  [font=\scriptsize]  {$u_{1} =r_{1}$};
\draw (195,125.4) node [anchor=north west][inner sep=0.75pt]  [font=\scriptsize]  {$u_{3} =r_{3}$};
\draw (123,171.4) node [anchor=north west][inner sep=0.75pt]  [font=\scriptsize]  {$u_{2} =r_{2}$};
\draw (49,101.4) node [anchor=north west][inner sep=0.75pt]    {$G'_{1}$};
\draw (118,223.4) node [anchor=north west][inner sep=0.75pt]    {$G'_{2}$};
\draw (251,195.4) node [anchor=north west][inner sep=0.75pt]    {$G'_{3}$};
\draw (150,138.4) node [anchor=north west][inner sep=0.75pt]    {$G$};

\end{tikzpicture}
    \end{minipage}
    \caption{The left graph is \(G^{S(L)}\) and the right graph is \(G^{S(L')}\).}
    \label{the figure of GSL' and GSL toether}
\end{figure}
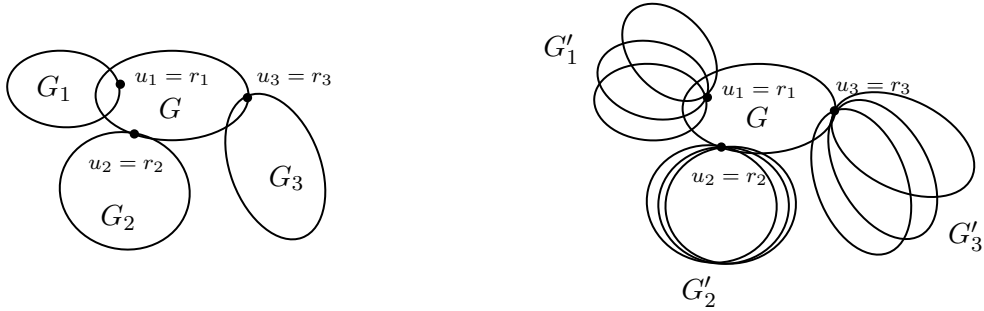

\noindent By the assumption that $\mathcal{G}$ is strongly $SDD(2w,p)$-acceptable, and using 
Theorem~\ref{Theorem: radius rooted (p,2w)-diameter distribution} together with 
Definition~\ref{def:R-G-2w-p-k}, there exists an $SDD(G^{S(L')},2w,p,r)$ distribution
\[
x=\{x_{F'} \mid F'\in \mathcal{F}^{w}_{2w}(G^{S(L')})\},
\]
with the property that
\[
\sum_{F'\in\mathcal{F}^{k}_{2w}(G^{S(L')},r)} x_{F'} \;\;\geq\;\; R^{k}_{(\mathcal{G},2w,p)}.
\]

\noindent For each $j=1,\ldots,m$, let $H_{j}$ be the induced subgraph of $G^{S(L')}$ on the vertex set 
\[
V(H_{j}) \;=\; V(G)\;\cup\;\Bigl(\bigcup_{i=1}^{l} V(G_{i}^{j})\Bigr).
\]
Observe that each $H_{j}$ is an identical copy of $G^{S(L)}$. Moreover, by construction of $G^{S(L')}$ and the subgraphs $H_{j}$, the following two properties hold for every fixed $j\in\{1,\ldots,m\}$:
\begin{enumerate}
    \item For each vertex $v\in V(H_{j})$, we have 
    \[
    d_{H_{j}}(r,v) = d_{G^{S(L')}}(r,v).
    \]
    \item For each $i\in\{1,\ldots,l\}$ and each vertex $v\in V(G_{i}^{j})$, we have 
    \[
    d_{G_{i}^{j}}(v,r_{i}) = d_{H_{j}}(v,r_{i}) = d_{G^{S(L')}}(v,r_{i}).
    \]
\end{enumerate}

\begin{Claim}
Fix $j\in\{1,\ldots,m\}$. The projection $x(H_{j})$ is a feasible solution to LP~\eqref{LP:3}.
\end{Claim}

\begin{proof}[Proof of Claim]
\renewcommand{\qedsymbol}{$\blacktriangle$}
Since $x$ is an $SDD(G^{S(L')},2w,p)$, Lemma~\ref{Lemma: projection is pload} implies that 
$x(H_{j})$ is an $SDD(H_{j},2w,p)$. Note that $r \in V(G)$, and hence $r \in V(H_{j})$. 
Moreover, since 
\begin{align}\label{Strong SDD for GSL'}
    \sum_{F' \in \mathcal{F}^{w}_{2w}(G^{S(L')},r)} x_{F'} = 1,
\end{align}
we must also have
\begin{align}\label{Strong SDD for the projections}
    \sum_{F'' \in \mathcal{F}^{w}_{2w}(H_{j},r)} x(H_{j})_{F''} = 1.
\end{align}

\noindent To see why \eqref{Strong SDD for the projections} holds, suppose not. Then there exists 
$F'' \in \mathcal{F}_{2w}(H_{j}) \setminus \mathcal{F}^{w}_{2w}(H_{j},r)$ with 
$x(H_{j})_{F''} > 0$. By definition of projection, this implies that there exists 
$F' \in \mathcal{F}_{2w}(G^{S(L')})$ with $x_{F'} > 0$ and $F' \cap E(H_{j}) = F''$. 
Now, $F'' \notin \mathcal{F}^{w}_{2w}(H_{j},r)$ means that in $H_{j} \setminus F''$ 
there exists some vertex $v \in V(H_{j})$ with $d_{H_{j}}(r,v) \geq w$. As mentioned earlier
\[
d_{G^{S(L')}}(r,v) =d_{H_{j}}(r,v) \geq w.
\]
Thus, $r$ and $v$ lie in the same connected component of $G^{S(L')} \setminus F'$, 
and so $F' \notin \mathcal{F}^{w}_{2w}(G^{S(L')},r)$. 
This contradicts \eqref{Strong SDD for GSL'}, since $x_{F'} > 0$. 
Hence, \eqref{Strong SDD for the projections} must hold, and therefore 
$x(H_{j})$ is an $SDD(H_{j},2w,p,r)$. It remains to show that
\[
\sum_{F'' \in \mathcal{F}^{k}_{2w}(H_{j},r)} x(H_{j})_{F''} \;\;\geq\;\; R^{k}_{(\mathcal{G},2w,p)}.
\]

\noindent Indeed, for any $F' \in \mathcal{F}^{w}_{2w}(G^{S(L')},r)$ we have 
$F'' = F' \cap E(H_{j}) \in \mathcal{F}^{k}_{2w}(H_{j},r)$ whenever 
$F' \in \mathcal{F}^{k}_{2w}(G^{S(L')},r)$. Therefore,
\begin{align*}
\sum_{F'' \in \mathcal{F}^{k}_{2w}(H_{j},r)} x(H_{j})_{F''}
&= \sum_{F'' \in \mathcal{F}^{k}_{2w}(H_{j},r)} 
   \sum_{\substack{F' \in \mathcal{F}_{2w}(G^{S(L')}) \\ F'' = F' \cap E(H_{j})}} x_{F'} \\
&\geq \sum_{F' \in \mathcal{F}^{k}_{2w}(G^{S(L')},r)} x_{F'} \\
&\geq R^{k}_{(\mathcal{G},2w,p)},
\end{align*}
where the last inequality follows from the defining property of $x$.
\end{proof}

For each $j \in \{1,\ldots,m\}$ and $i \in \{1,\ldots,l\}$, let $\mathcal{F}_{i}^{j}$ denote the copy of $\mathcal{F}_{i}$ inside $H_{j}$. 
Also, let $\mathcal{F}^{j} = \bigcap_{i=1}^{l} \mathcal{F}_{i}^{j}$ be the copy of $\mathcal{F}$ in $H_{j}$. 
Since $x(H_{j})$ is a feasible solution to LP~\eqref{LP:3}, we obtain
\begin{align*}
    \sum_{F_{j}\in \mathcal{F}^{w}_{2w}(H_{j}) \setminus \mathcal{F}^{j}} x(H_{j})_{F_{j}} \;\;\geq\; z.
\end{align*}

\noindent Hence, for each $j \in \{1,\ldots,m\}$ there exists an index $i_{j} \in \{1,\ldots,l\}$ such that
\begin{align}\label{bad 2wdiameteer for Hj in Gi}
    \sum_{F_{j}\in \mathcal{F}^{w}_{2w}(H_{j}) \setminus \mathcal{F}_{i_{j}}^{j}} x(H_{j})_{F_{j}} \;\;\geq\; \frac{z}{l}.
\end{align}

\noindent The indices $i_{j}$ play a crucial role. 
By the Pigeonhole Principle, for at least 
\[
t = \Big\lceil \frac{m}{l} \Big\rceil
\]
values of $j$, the same index $i_{j}$ is chosen. Without loss of generality, assume $i_{1} = \cdots = i_{t} = 1$. 
Summing \eqref{bad 2wdiameteer for Hj in Gi} over $j=1,\ldots,t$ gives
\begin{align}\label{bad 2wdiameter for the same i for many j's}
    \sum_{j=1}^{t} \sum_{F_{j}\in \mathcal{F}^{w}_{2w}(H_{j}) \setminus \mathcal{F}_{1}^{j}} x(H_{j})_{F_{j}}
    \;\;\geq\; \sum_{j=1}^{t} \frac{z}{l} = \frac{tz}{l} 
    \;\;\geq\; \frac{m}{l} \cdot \frac{z}{l} \;>\; 1.
\end{align}

\noindent The left-hand side of \eqref{bad 2wdiameter for the same i for many j's} can be rewritten as
\begin{align}\label{bad 2wdiameter expansion}
    \sum_{j=1}^{t} \sum_{F_{j}\in \mathcal{F}^{w}_{2w}(H_{j}) \setminus \mathcal{F}_{1}^{j}}
        \;\;\sum_{\substack{F'\in \mathcal{F}_{2w}(G^{S(L')}) \\ F_{j}=F'\cap E(H_{j})}} x_{F'}.
\end{align}
Since
\[
\sum_{F' \in \mathcal{F}_{2w}(G^{S(L')})} x_{F'} 
= \sum_{F' \in \mathcal{F}^{w}_{2w}(G^{S(L')})} x_{F'} 
= 1,
\]
it follows that some $F' \in \mathcal{F}^{w}_{2w}(G^{S(L')})$ with $x_{F'}>0$ must appear at least twice in the expansion \eqref{bad 2wdiameter expansion}. 

\noindent Concretely, there exist distinct indices $j_{1},j_{2} \in \{1,\ldots,t\}$ such that 
\[
F_{j_{1}} = F'\cap E(H_{j_{1}}) \in \mathcal{F}^{w}_{2w}(H_{j_{1}}) \setminus \mathcal{F}_{1}^{j_{1}}, 
\quad 
F_{j_{2}} = F'\cap E(H_{j_{2}}) \in \mathcal{F}^{w}_{2w}(H_{j_{2}}) \setminus \mathcal{F}_{1}^{j_{2}}.
\]

\noindent The first condition implies that in $H_{j_{1}}\setminus F_{j_{1}}$ there exists a vertex $v_{1}\in V(G^{j_{1}}_{1})$ such that 
$d_{H_{j_{1}}}(r_{1},v_{1}) \geq w$. As mentioned earlier, we also have 
$d_{G^{S(L')}}(r_{1},v_{1}) \geq w$. 
Similarly, from the second condition there exists $v_{2}\in V(G^{j_{2}}_{1})$ with 
$d_{G^{S(L')}}(r_{1},v_{2}) \geq w$. Therefore, in $G^{S(L')}\setminus F'$ the vertices $v_{1}, v_{2}, r_{1}$ all lie in the same connected component. Moreover, since $r_{1}$ is a cut vertex separating $v_{1}$ and $v_{2}$, we have
\[
d_{G^{S(L')}}(v_{1},v_{2}) 
= d_{G^{S(L')}}(r_{1},v_{1}) + d_{G^{S(L')}}(r_{1},v_{2}) 
\;\;\geq\; 2w.
\]
This contradicts the assumption that $F' \in \mathcal{F}^{w}_{2w}(G^{S(L')})$. 

\noindent Hence, our initial assumption that the optimal value $z$ of LP~\eqref{LP:3} is positive must be false. Therefore, the optimal value is $z=0$, completing the proof.

        \end{proof}

We work in the same regime as before, where \(w\) is an even integer. For the remainder, set \(k=\tfrac{w}{2}\) and fix a small constant \(\epsilon>0\).
Choose a cactus \(H'\in\mathcal{G}\) and a distinguished vertex \(r'\in V(H')\) such that
\begin{align}\label{choice of H'}
  \max_{\;y\text{ is an }SDD(H',2w,p,r')}
  \left\{
    \sum_{F\in \mathcal{F}^{k}_{2w}(H',r')} y_F
  \right\}
  \;<\;
  R^{k}_{(\mathcal{G},2w,p)} + \epsilon.
\end{align}

\noindent
Now, considering this extremal pair \((H',r')\), we replace the two paths used in
Theorem~\ref{Theorem Cactus 20/9} with copies of \(H'\) attached at the cycle vertices \(u_1\) and \(u_2\)
(via the distinguished vertices \(r'_1,r'_2\) of the copies).
In other words, we work with the graph obtained by taking the \(1\)-sum of the cycle and two disjoint
copies of \(H'\) at \(u_1\) and \(u_2\).

\begin{Theorem}\label{Theorem Cactus 16/7}
Let \(\mathcal{G}\) be the family of cactus graphs. If \(\mathcal{G}\) is strongly \(SDD(2w,p)\)-acceptable,
then \(w \cdot p \geq \tfrac{8}{7}-\tfrac{4\epsilon}{7}\).
\end{Theorem}

\begin{proof}
Let \(H\) be a cycle of length \(2w\). Let \(r, u_{1}, \tilde{r}, u_{2}\) be four vertices of the cycle
in anticlockwise order, such that
\(d(r,u_{1}) = d(u_{1},\tilde{r}) = d(\tilde{r},u_{2}) = d(u_{2},r)=\tfrac{w}{2}\).
Let \(H'_{1},H'_{2}\) be two copies of \(H'\) with corresponding distinguished vertices \(r'_{1},r'_{2}\).
Set
\[
G \;=\; H^{S\bigl(\{(u_{1},(H'_{1},r'_{1})),\ (u_{2},(H'_{2},r'_{2}))\}\bigr)}.
\]
Then \(G \in \mathcal{G}\).
Denote by \(P_{u_{1}}\) the path of length \(\tfrac{w}{2}\) from \(r\) to \(u_{1}\),
by \(P_{a}\) the path from \(u_{1}\) to \(\tilde{r}\),
by \(P_{b}\) the path from \(\tilde{r}\) to \(u_{2}\),
and by \(P_{u_{2}}\) the path from \(u_{2}\) to \(r\).
See Figure~\ref{last figure in 16/7 cactus} for an illustration:

\begin{figure}[H]
    \centering
    \tikzset{every picture/.style={line width=0.75pt}} 

\begin{tikzpicture}[x=0.75pt,y=0.75pt,yscale=-1,xscale=1]

\draw   (217,125) .. controls (217,106.77) and (231.77,92) .. (250,92) .. controls (268.23,92) and (283,106.77) .. (283,125) .. controls (283,143.23) and (268.23,158) .. (250,158) .. controls (231.77,158) and (217,143.23) .. (217,125) -- cycle ;
\draw  [fill={rgb, 255:red, 0; green, 0; blue, 0 }  ,fill opacity=1 ] (247.33,92) .. controls (247.33,90.53) and (248.53,89.33) .. (250,89.33) .. controls (251.47,89.33) and (252.67,90.53) .. (252.67,92) .. controls (252.67,93.47) and (251.47,94.67) .. (250,94.67) .. controls (248.53,94.67) and (247.33,93.47) .. (247.33,92) -- cycle ;
\draw  [fill={rgb, 255:red, 0; green, 0; blue, 0 }  ,fill opacity=1 ] (246.33,157.67) .. controls (246.33,156.19) and (247.53,155) .. (249,155) .. controls (250.47,155) and (251.67,156.19) .. (251.67,157.67) .. controls (251.67,159.14) and (250.47,160.33) .. (249,160.33) .. controls (247.53,160.33) and (246.33,159.14) .. (246.33,157.67) -- cycle ;
\draw  [fill={rgb, 255:red, 0; green, 0; blue, 0 }  ,fill opacity=1 ] (214.33,125) .. controls (214.33,123.53) and (215.53,122.33) .. (217,122.33) .. controls (218.47,122.33) and (219.67,123.53) .. (219.67,125) .. controls (219.67,126.47) and (218.47,127.67) .. (217,127.67) .. controls (215.53,127.67) and (214.33,126.47) .. (214.33,125) -- cycle ;
\draw  [fill={rgb, 255:red, 0; green, 0; blue, 0 }  ,fill opacity=1 ] (279.67,125) .. controls (279.67,123.53) and (280.86,122.33) .. (282.33,122.33) .. controls (283.81,122.33) and (285,123.53) .. (285,125) .. controls (285,126.47) and (283.81,127.67) .. (282.33,127.67) .. controls (280.86,127.67) and (279.67,126.47) .. (279.67,125) -- cycle ;
\draw   (156.64,160.45) .. controls (151.05,150.93) and (160.02,135.27) .. (176.69,125.48) .. controls (193.36,115.69) and (211.41,115.48) .. (217,125) .. controls (222.59,134.52) and (213.62,150.18) .. (196.95,159.97) .. controls (180.28,169.76) and (162.24,169.98) .. (156.64,160.45) -- cycle ;
\draw   (340.56,164.83) .. controls (334.28,173.91) and (316.3,172.36) .. (300.4,161.36) .. controls (284.51,150.36) and (276.72,134.08) .. (283,125) .. controls (289.28,115.92) and (307.27,117.47) .. (323.16,128.47) .. controls (339.06,139.47) and (346.85,155.75) .. (340.56,164.83) -- cycle ;

\draw (247,76.4) node [anchor=north west][inner sep=0.75pt]    {$r$};
\draw (246,163.4) node [anchor=north west][inner sep=0.75pt]    {$\Tilde{r} $};
\draw (220,116.4) node [anchor=north west][inner sep=0.75pt]    {$u_{1}$};
\draw (262,116.4) node [anchor=north west][inner sep=0.75pt]    {$u_{2}$};
\draw (272.17,150.9) node [anchor=north west][inner sep=0.75pt]    {$P_{b}$};
\draw (215.17,150.9) node [anchor=north west][inner sep=0.75pt]    {$P_{a}$};
\draw (199.17,89.9) node [anchor=north west][inner sep=0.75pt]    {$P_{u_{1}}$};
\draw (275.17,89.9) node [anchor=north west][inner sep=0.75pt]    {$P_{u_{2}}$};
\draw (165,139.4) node [anchor=north west][inner sep=0.75pt]    {$H'_{1}$};
\draw (307,140.4) node [anchor=north west][inner sep=0.75pt]    {$H'_{2}$};
\draw (244,194.4) node [anchor=north west][inner sep=0.75pt]    {$G$};

\end{tikzpicture}
    \label{last figure in 16/7 cactus}
\end{figure}

\noindent
Recall that \(k=\tfrac{w}{2}\).
By Theorem~\ref{Theorem: generalized radius rooted (p,2w)-diameter distribution}, there exists an
\(SDD(G,2w,p,r)\) distribution
\(
y = \{ y_{F} \mid F \in \mathcal{F}^{w}_{2w}(G,r) \}
\)
such that
\begin{align}\label{A before def 16/7}
    \sum_{F \in \mathcal{F}^{k}_{2w}(G, r)} y_{F}
    \;\; \geq \;\; R^{k}_{(\mathcal{G},2w,p)}
    \;\; \geq \;\; 1-(w-k)p
    \;=\; 1-\frac{w\cdot p}{2},
\end{align}
and moreover, for each \(i = 1,2\), the projection \(y(H'_{i})\) is an \(SDD(H'_{i},2w,p,r'_{i})\).

\noindent
For the sake of contradiction, assume that \(w\cdot p < \tfrac{8}{7}-\tfrac{4\epsilon}{7}\).
Let \(\mathcal{F}=\mathcal{F}^{w}_{2w}(G,r)\) and \(A=\mathcal{F}^{k}_{2w}(G,r)\).
For \(i=1,2\), let \(\mathcal{F}_{i}=\mathcal{F}^{w}_{2w}(H'_{i},r'_{i})\) and
\(\mathcal{F}^{k}_{i}=\mathcal{F}^{k}_{2w}(H'_{i},r'_{i})\).
From \eqref{A before def 16/7} we have
\begin{align}\label{A 16/7}
    \sum_{F\in A} y_{F} \;\geq\; R^{k}_{(\mathcal{G},2w,p)}.
\end{align}

\noindent
In the following we are going to show that under the assumption \(w \cdot p < \tfrac{8}{7}-\tfrac{4\epsilon}{7}\)
there exists \(F \in A\) with $y_{F}>0$ which (i) does not pick any edges from \(P_a\) or \(P_b\),
and (ii) fails the radius-\(k\) condition inside both attachments:
\(F\cap E(H'_{i}) \notin \mathcal{F}_{i}^{k}\) for \(i=1,2\).
This leads to a contradiction, since \(d(u_{1},u_{2})=w\) and for each \(i\) there exists
\(v_{i}\in V(H'_{i})\) with \(d(u_{i},v_{i})\geq k=\tfrac{w}{2}\).
Thus \(d(v_{1},v_{2})\geq 2w\), and all \(v_{1},v_{2},u_{1},u_{2}\) lie in the same connected
component of \(G\setminus F\), contradicting that \(F\) is a \(2w\)-diameter decomposition.
Hence \(w \cdot p \geq \tfrac{8}{7}-\tfrac{4\epsilon}{7}\). \\

Suppose \(F\in A\).
Since \(d(r,u_{1})=d(r,u_{2})=k=\tfrac{w}{2}\), we must have
\(F\cap E(P_{u_{1}})\neq\varnothing\) and \(F\cap E(P_{u_{2}})\neq\varnothing\). Let \(P'_{a},P'_{b}\) be the two \(r\)–\(\tilde{r}\) paths on the cycle containing \(P_{a},P_{b}\), respectively.
Define \(A_{a}=\{F\in A \mid F \cap E(P_{a})\neq \varnothing\}\).
If \(F\in A_{a}\), then \(|F\cap E(P'_{a})|\geq 2\).
Since \(P'_{a}\) is a shortest path of length \(w\) from \(r\), by Lemma~\ref{lemma:upperbound},
\begin{align}\label{Aa in 16/7}
    \sum_{\substack{F \in \mathcal{F}\\ |F\cap E(P'_{a})| \geq 2}} y_{F} \;\leq\; (w\cdot p)-1
    \quad\Longrightarrow\quad
    \sum_{F\in A_{a}} y_{F} \;\leq\; (w\cdot p)-1.
\end{align}
Similarly, with \(A_{b}=\{F\in A \mid F \cap E(P_{b})\neq \varnothing \}\) and the path \(P'_{b}\),
\begin{align}\label{Ab in 16/7}
    \sum_{F\in A_{b}} y_{F} \;\leq\; (w\cdot p)-1.
\end{align}

\noindent Next, set
\[
A_{c}=\{F\in A \mid F\cap E(H'_{1})\in \mathcal{F}_{1}^{k}\},
\qquad
A_{d}=\{F\in A \mid F\cap E(H'_{2})\in \mathcal{F}_{2}^{k}\}.
\]

\begin{Claim}

\begin{align}\label{Ac Ad 16/7}
    \sum_{F\in A_{c}} y_{F}
< R^{k}_{(\mathcal{G},2w,p)} + \epsilon - \Bigl(1-\tfrac{w\cdot p}{2}\Bigr),
\quad
\sum_{F\in A_{d}} y_{F}
< R^{k}_{(\mathcal{G},2w,p)} + \epsilon - \Bigl(1-\tfrac{w\cdot p}{2}\Bigr).
\end{align}

\end{Claim}

\begin{proof}[Proof of Claim]
\renewcommand{\qedsymbol}{$\blacktriangle$}

\noindent We prove the inequality for \(A_{c}\); the case of \(A_{d}\) follows by the same reasoning.  
Since \(y(H'_{1})\) is an \(SDD(H'_{1},2w,p,r'_{1})\) and \(H'_{1}\) is a copy of \(H'\),
the choice of \((H',r')\) in \eqref{choice of H'} implies
\[
\sum_{F'\in \mathcal{F}^{k}_{1}} y(H'_{1})_{F'} 
\;<\; R^{k}_{(\mathcal{G},2w,p)} + \epsilon.
\]
Therefore,
\begin{align*}
\sum_{\substack{F\in \mathcal{F}\\ F\cap E(H'_{1})\in \mathcal{F}_{1}^{k}}} y_{F}
&=\sum_{F'\in \mathcal{F}^{k}_{1}} y(H'_{1})_{F'} \\
&=\sum_{F'\in \mathcal{F}^{k}_{1}}
       \sum_{\substack{F\in \mathcal{F}\\ F\cap E(H'_{1})=F'}} y_{F}
\;<\; R^{k}_{(\mathcal{G},2w,p)} + \epsilon.
\end{align*}

\noindent Note that \(A_{c}\subseteq \{F\in \mathcal{F}\mid F\cap E(H'_{1})\in \mathcal{F}_{1}^{k}\}\).
Consider instead
\[
A'_{a}=\{F\in \mathcal{F}\mid F\cap E(P_{u_{1}})=\varnothing\}.
\]
Then \(A'_{a}\subseteq \{F\in \mathcal{F}\mid F\cap E(H'_{1})\in \mathcal{F}_{1}^{k}\}\) and
\(A_{c}\cap A'_{a}=\varnothing\).
Moreover, since 
\(\mathcal{F}=A'_{a}\,\dot\cup\,\{F\in \mathcal{F}\mid F\cap E(P_{u_{1}})\neq\varnothing\}\),
we obtain, by applying the edge bound along the path \(P_{u_{1}}\),
\[
\sum_{\{F\in \mathcal{F}\mid F\cap E(P_{u_{1}})\neq\varnothing\}} y_{F}
\;\leq\;
\sum_{e\in E(P_{u_{1}})} \sum_{\substack{F\in \mathcal{F}\\ e\in F}} y_{F}
\;\leq\; \tfrac{w}{2}\,p \;=\; \tfrac{w\cdot p}{2}.
\]
Hence, since 
\[
1=\sum_{F\in \mathcal{F}}y_{F}
  =\sum_{F\in A'_{a}}y_{F}
  +\sum_{\{F\in \mathcal{F}\mid F\cap E(P_{u_{1}})\neq\varnothing\}}y_{F},
\]
we conclude
\[
\sum_{F\in A'_{a}} y_{F} \;\geq\; 1-\tfrac{w\cdot p}{2}.
\]

\noindent Consequently,
\begin{align*}
\sum_{F\in A_{c}} y_{F}
&\;\leq\;
\sum_{\substack{F\in \mathcal{F}\\ F\cap E(H'_{1})\in \mathcal{F}_{1}^{k}}} y_{F}
\;-\;
\sum_{F\in A'_{a}} y_{F} \\
&\;<\;
R^{k}_{(\mathcal{G},2w,p)} + \epsilon - \Bigl(1-\tfrac{w\cdot p}{2}\Bigr).
\nonumber
\end{align*}

\end{proof}

\noindent
Combining \eqref{Aa in 16/7}, \eqref{Ab in 16/7}, \eqref{Ac Ad 16/7}, we conclude

\begin{equation}\label{all Aa Ab Ac Ad together}
    \begin{aligned}
\sum_{F\in A_{a}\cup A_{b}\cup A_{c}\cup A_{d}} y_{F}\leq  2\,(w\cdot p - 1) + 2\Bigl(R^{k}_{(\mathcal{G},2w,p)} + \epsilon - (1-\tfrac{w\cdot p}{2})\Bigr) \\
= 3w\cdot p + 2R^{k}_{(\mathcal{G},2w,p)} + 2\epsilon - 4. 
\end{aligned}
\end{equation}

\noindent Let \(A^{*}=A\setminus(A_{a}\cup A_{b}\cup A_{c}\cup A_{d})\).
Using \eqref{A 16/7} and \eqref{all Aa Ab Ac Ad together}, we get
\begin{align*}
\sum_{F\in A^{*}} y_{F}
= \sum_{F\in A} y_{F}
   \;-\;
   \sum_{F\in A_{a}\cup A_{b}\cup A_{c}\cup A_{d}} y_{F} \geq
   R^{k}_{(\mathcal{G},2w,p)}
   \;-\;
   \Bigl[3w\cdot p + 2R^{k}_{(\mathcal{G},2w,p)} + 2\epsilon - 4\Bigr] \\
= (4 - 2\epsilon) - 3\,w\cdot p - R^{k}_{(\mathcal{G},2w,p)}.
\end{align*}
By the upper bound in Definition~\ref{def:R-G-2w-p-k}, we have
\(R^{k}_{(\mathcal{G},2w,p)} \le \tfrac{w\cdot p}{2}\).  
Therefore,
\begin{align*}
\sum_{F\in A^{*}} y_{F}
&\geq (4 - 2\epsilon) - 3\,w\cdot p - \tfrac{w\cdot p}{2}
 = (4 - 2\epsilon) - \tfrac{7}{2}\,w\cdot p >\; 0,
\end{align*}

\noindent where the last inequality uses the assumption \(w\cdot p < \tfrac{8}{7}-\tfrac{4\epsilon}{7}\).
Thus \(A^{*}\neq\varnothing\), so there exists \(F\in A^{*}\subseteq A\) with \(y_{F}>0\) such that
\(F\cap E(P_{a})=F\cap E(P_{b})=\varnothing\) and
\(F\cap E(H'_{i})\notin \mathcal{F}_{i}^{k}\) for \(i=1,2\), completing the proof.

\end{proof}

\begin{Corollary}\label{Corollary 16/7}
Let \(\mathcal{G}\) be the family of cactus graphs. 
If \(\mathcal{G}\) is strongly \(SDD(2w,p)\)-acceptable,
then \(w \cdot p \geq \tfrac{8}{7}\).    
\end{Corollary}

\begin{proof}
Recall that before Theorem~\ref{Theorem Cactus 16/7} we fixed a constant \(\epsilon>0\), 
and in that theorem we proved that
\[
w \cdot p \;\geq\; \tfrac{8}{7}-\tfrac{4\epsilon}{7}.
\]
Since this bound holds for every \(\epsilon>0\), taking the limit as \(\epsilon \to 0^{+}\) yields
\[
w \cdot p \;\geq\; \tfrac{8}{7},
\]
as claimed.
\end{proof}

\section{An Explicit Construction for the $\frac{20}{9}$ Lower Bound}\label{sec explicit example 20/9}
Inspired by the proof presented in \cref{sec:proof_main_thm}, we now give an explicit example on a cactus graph where the integrality gap is at least $\frac{20}{9}$. In particular, given an $\epsilon>0$, we are going to give an instance of the minimum multicut problem $M$ on a cactus graph $G$ such that, $$\frac{\text{OPT}_{IP}(M)}{\text{OPT}_{LP}(M)}\geq \frac{20}{9}-\epsilon.$$ 
We will construct the graph $G$ by using two $1-sum$ operations. Let $H$ be the following graph,

\begin{figure}[H]
    \centering
    \tikzset{every picture/.style={line width=0.75pt}} 

\begin{tikzpicture}[x=0.75pt,y=0.75pt,yscale=-1,xscale=1]

\draw    (242,86) -- (274.33,132) ;
\draw    (242,86) -- (213.33,132) ;
\draw    (247.57,179.13) -- (213.33,132) ;
\draw    (247.57,179.13) -- (274.33,132) ;
\draw    (213.33,132) -- (172.33,133) ;
\draw    (315.33,132) -- (274.33,132) ;
\draw  [fill={rgb, 255:red, 0; green, 0; blue, 0 }  ,fill opacity=1 ] (239,89) .. controls (239,87.34) and (240.34,86) .. (242,86) .. controls (243.66,86) and (245,87.34) .. (245,89) .. controls (245,90.66) and (243.66,92) .. (242,92) .. controls (240.34,92) and (239,90.66) .. (239,89) -- cycle ;
\draw  [fill={rgb, 255:red, 0; green, 0; blue, 0 }  ,fill opacity=1 ] (210.33,132) .. controls (210.33,130.34) and (211.68,129) .. (213.33,129) .. controls (214.99,129) and (216.33,130.34) .. (216.33,132) .. controls (216.33,133.66) and (214.99,135) .. (213.33,135) .. controls (211.68,135) and (210.33,133.66) .. (210.33,132) -- cycle ;
\draw  [fill={rgb, 255:red, 0; green, 0; blue, 0 }  ,fill opacity=1 ] (172.33,133) .. controls (172.33,131.34) and (173.68,130) .. (175.33,130) .. controls (176.99,130) and (178.33,131.34) .. (178.33,133) .. controls (178.33,134.66) and (176.99,136) .. (175.33,136) .. controls (173.68,136) and (172.33,134.66) .. (172.33,133) -- cycle ;
\draw  [fill={rgb, 255:red, 0; green, 0; blue, 0 }  ,fill opacity=1 ] (271.33,132) .. controls (271.33,130.34) and (272.68,129) .. (274.33,129) .. controls (275.99,129) and (277.33,130.34) .. (277.33,132) .. controls (277.33,133.66) and (275.99,135) .. (274.33,135) .. controls (272.68,135) and (271.33,133.66) .. (271.33,132) -- cycle ;
\draw  [fill={rgb, 255:red, 0; green, 0; blue, 0 }  ,fill opacity=1 ] (244.57,179.13) .. controls (244.57,177.48) and (245.91,176.13) .. (247.57,176.13) .. controls (249.23,176.13) and (250.57,177.48) .. (250.57,179.13) .. controls (250.57,180.79) and (249.23,182.13) .. (247.57,182.13) .. controls (245.91,182.13) and (244.57,180.79) .. (244.57,179.13) -- cycle ;
\draw  [fill={rgb, 255:red, 0; green, 0; blue, 0 }  ,fill opacity=1 ] (312.33,132) .. controls (312.33,130.34) and (313.68,129) .. (315.33,129) .. controls (316.99,129) and (318.33,130.34) .. (318.33,132) .. controls (318.33,133.66) and (316.99,135) .. (315.33,135) .. controls (313.68,135) and (312.33,133.66) .. (312.33,132) -- cycle ;

\draw (237,72.4) node [anchor=north west][inner sep=0.75pt]    {$v_{1}$};
\draw (220,124.4) node [anchor=north west][inner sep=0.75pt]    {$v_{2}$};
\draw (249,124.4) node [anchor=north west][inner sep=0.75pt]    {$v_{3}$};
\draw (154,123.4) node [anchor=north west][inner sep=0.75pt]    {$v_{5}$};
\draw (241,183.4) node [anchor=north west][inner sep=0.75pt]    {$v_{4}$};
\draw (320,121.4) node [anchor=north west][inner sep=0.75pt]    {$v_{6}$};
\draw (238,210.4) node [anchor=north west][inner sep=0.75pt]    {$H$};

\end{tikzpicture}
    \label{fig:enter-label}
\end{figure}

and $k$ be a sufficiently large natural number. Let $H_{1},\ldots,H_{k}$ be $k$ disjoint copies of $H$. Let $v_1^i$ be the vertex corresponding to $v_1$ in $H$. We construct $H'$ by taking 1-sum of $H_1,H_2,\ldots,H_k$ at $v_1^1,v_1^2,\ldots,v_1^k$, respectively. Let $v_1=v_1^1=v_1^2=\ldots=v_1^k$. We obtain $H''$ by adding an edge $(v_1,v_0)$ to $H'$. See the figure below for an illustration.

\begin{figure}[H]
    \centering
    \tikzset{every picture/.style={line width=0.75pt}} 

\begin{tikzpicture}[x=0.75pt,y=0.75pt,yscale=-1,xscale=1]

\draw    (224.27,162.46) -- (246.54,134.67) ;
\draw    (224.27,162.46) -- (257.48,167.14) ;
\draw    (246.54,134.67) -- (238.59,113.05) ;
\draw    (264.83,188.96) -- (257.48,167.14) ;
\draw  [fill={rgb, 255:red, 0; green, 0; blue, 0 }  ,fill opacity=1 ] (246.01,133.08) .. controls (247,132.74) and (248.05,133.18) .. (248.35,134.06) .. controls (248.65,134.95) and (248.08,135.93) .. (247.08,136.27) .. controls (246.08,136.61) and (245.03,136.16) .. (244.74,135.28) .. controls (244.44,134.4) and (245.01,133.41) .. (246.01,133.08) -- cycle ;
\draw  [fill={rgb, 255:red, 0; green, 0; blue, 0 }  ,fill opacity=1 ] (238.59,113.05) .. controls (239.59,112.72) and (240.64,113.16) .. (240.94,114.04) .. controls (241.23,114.92) and (240.67,115.91) .. (239.67,116.25) .. controls (238.67,116.58) and (237.62,116.14) .. (237.32,115.26) .. controls (237.02,114.38) and (237.59,113.39) .. (238.59,113.05) -- cycle ;
\draw  [fill={rgb, 255:red, 0; green, 0; blue, 0 }  ,fill opacity=1 ] (256.94,165.54) .. controls (257.94,165.21) and (258.99,165.65) .. (259.28,166.53) .. controls (259.58,167.41) and (259.01,168.4) .. (258.01,168.74) .. controls (257.02,169.07) and (255.97,168.63) .. (255.67,167.75) .. controls (255.37,166.87) and (255.94,165.88) .. (256.94,165.54) -- cycle ;
\draw  [fill={rgb, 255:red, 0; green, 0; blue, 0 }  ,fill opacity=1 ] (223.74,160.86) .. controls (224.74,160.53) and (225.79,160.97) .. (226.08,161.85) .. controls (226.38,162.73) and (225.81,163.72) .. (224.81,164.06) .. controls (223.81,164.39) and (222.76,163.95) .. (222.47,163.07) .. controls (222.17,162.19) and (222.74,161.2) .. (223.74,160.86) -- cycle ;
\draw  [fill={rgb, 255:red, 0; green, 0; blue, 0 }  ,fill opacity=1 ] (264.29,187.37) .. controls (265.29,187.03) and (266.34,187.47) .. (266.63,188.36) .. controls (266.93,189.24) and (266.36,190.22) .. (265.36,190.56) .. controls (264.36,190.9) and (263.31,190.46) .. (263.02,189.57) .. controls (262.72,188.69) and (263.29,187.7) .. (264.29,187.37) -- cycle ;
\draw    (331.62,162.38) -- (296.32,167.01) ;
\draw    (331.62,162.38) -- (310.9,136.01) ;
\draw    (296.32,167.01) -- (287.09,188.12) ;
\draw    (320.71,115.18) -- (310.9,136.01) ;
\draw  [fill={rgb, 255:red, 0; green, 0; blue, 0 }  ,fill opacity=1 ] (295.6,168.54) .. controls (294.64,168.09) and (294.19,167.04) .. (294.59,166.2) .. controls (294.99,165.36) and (296.08,165.04) .. (297.03,165.49) .. controls (297.99,165.94) and (298.44,166.98) .. (298.04,167.82) .. controls (297.65,168.67) and (296.55,168.99) .. (295.6,168.54) -- cycle ;
\draw  [fill={rgb, 255:red, 0; green, 0; blue, 0 }  ,fill opacity=1 ] (287.09,188.12) .. controls (286.13,187.67) and (285.68,186.62) .. (286.08,185.78) .. controls (286.47,184.94) and (287.57,184.62) .. (288.52,185.07) .. controls (289.47,185.52) and (289.93,186.56) .. (289.53,187.41) .. controls (289.13,188.25) and (288.04,188.57) .. (287.09,188.12) -- cycle ;
\draw  [fill={rgb, 255:red, 0; green, 0; blue, 0 }  ,fill opacity=1 ] (310.19,137.54) .. controls (309.23,137.09) and (308.78,136.04) .. (309.18,135.2) .. controls (309.57,134.36) and (310.67,134.04) .. (311.62,134.49) .. controls (312.57,134.94) and (313.02,135.98) .. (312.63,136.82) .. controls (312.23,137.67) and (311.14,137.99) .. (310.19,137.54) -- cycle ;
\draw  [fill={rgb, 255:red, 0; green, 0; blue, 0 }  ,fill opacity=1 ] (330.9,163.9) .. controls (329.95,163.45) and (329.5,162.41) .. (329.9,161.56) .. controls (330.29,160.72) and (331.39,160.4) .. (332.34,160.85) .. controls (333.29,161.3) and (333.74,162.35) .. (333.35,163.19) .. controls (332.95,164.03) and (331.86,164.35) .. (330.9,163.9) -- cycle ;
\draw  [fill={rgb, 255:red, 0; green, 0; blue, 0 }  ,fill opacity=1 ] (319.99,116.7) .. controls (319.04,116.25) and (318.58,115.21) .. (318.98,114.36) .. controls (319.38,113.52) and (320.47,113.2) .. (321.42,113.65) .. controls (322.38,114.1) and (322.83,115.15) .. (322.43,115.99) .. controls (322.04,116.83) and (320.94,117.15) .. (319.99,116.7) -- cycle ;
\draw  [fill={rgb, 255:red, 0; green, 0; blue, 0 }  ,fill opacity=1 ] (279.59,139.05) .. controls (280.59,138.72) and (281.64,139.16) .. (281.94,140.04) .. controls (282.23,140.92) and (281.67,141.91) .. (280.67,142.25) .. controls (279.67,142.58) and (278.62,142.14) .. (278.32,141.26) .. controls (278.02,140.38) and (278.59,139.39) .. (279.59,139.05) -- cycle ;
\draw    (296.32,167.01) -- (280.13,140.65) ;
\draw    (281.94,140.04) -- (310.19,137.54) ;
\draw    (280.13,140.65) -- (257.48,167.14) ;
\draw    (278.32,141.26) -- (246.54,134.67) ;
\draw    (280.33,110) -- (280.13,140.65) ;
\draw  [fill={rgb, 255:red, 0; green, 0; blue, 0 }  ,fill opacity=1 ] (279.29,109.37) .. controls (280.29,109.03) and (281.34,109.47) .. (281.63,110.36) .. controls (281.93,111.24) and (281.36,112.22) .. (280.36,112.56) .. controls (279.36,112.9) and (278.31,112.46) .. (278.02,111.57) .. controls (277.72,110.69) and (278.29,109.7) .. (279.29,109.37) -- cycle ;

\draw (268,207.4) node [anchor=north west][inner sep=0.75pt]    {$H''$};
\draw (283,128.4) node [anchor=north west][inner sep=0.75pt]    {$v_{1}$};
\draw (267,158.4) node [anchor=north west][inner sep=0.75pt]  [font=\large]  {$...$};
\draw (241,144.4) node [anchor=north west][inner sep=0.75pt]    {$H_{1}$};
\draw (294,145.4) node [anchor=north west][inner sep=0.75pt]    {$H_{k}$};
\draw (274,98.4) node [anchor=north west][inner sep=0.75pt]    {$v_{0}$};

\end{tikzpicture}
    \label{fig:enter-label}
\end{figure}

Let $H''_{1},\ldots,H''_{k}$ be $k$ disjoint copies of $H''$. Let $v_0^i$ be the vertex of $H^{''}_i$ corresponding to the vertex $v_0$ of $H''$. Let $G=(V,E)$ be the graph obtained by taking $1-sum$ of $H''_{1},\ldots,H''_{k}$ be at $v_0^1,\ldots,v_0^k$, respectively. Let $v_0=v_0^1=\ldots=v_0^k$. Let $v_{i}$ be the unique neighbor of $v_{0}$ in $H_{i}^{''}$. See the figure below for an illustration.

\begin{figure}[H]
    \centering
    \input{G_in_the_example_Figure}
    \label{fig:enter-label}
\end{figure}
Let $l(e)=1$ for all $e \in E$. We partition the set of edges $E$ based on their distance from $v_0$ as follows: $$E_{1}=\{e\in E~|~l(v_{0},e)=0\},~E_{2}=\{e\in E~|~l(v_{0},e)=1\},~E_{3}=\{e\in E~|~l(v_{0},e)=2\}.$$
Note that $E_{1}$ is the set of incident edges to $v_{0}$, and $E_{1},E_{2},E_{3}$ is a partition of $E$. Now we are going to define an instance of the minimum multicut problem on $G$. We first assign costs to edges as follows: 
\[
c(e) = \begin{cases}
            k & e\in E_{1}\\
      2 & e\in E_{2}\\
      1 & e\in E_{3}
         \end{cases}
\]
be the cost function, and let $S=\{(u,v) \in V \times V~|~l(u,v) \geq 4 \}$ be the set of source-sink pairs. We will denote this multicut instance by $M$. It is easy to see that $x=\{x_{e}=\frac{1}{4}~|~e\in E\}$ is a feasible fractional solution to $M$ with cost $$\dfrac{k \cdot k+ 2 \cdot k^{2}\cdot 2+ 4k^{2} \cdot 1}{4}=\dfrac{9 \cdot k^{2}}{4}.$$ We will now show that the cost of any feasible multicut (i.e.~an integral solution) is at least $5k^{2}-9k$. This will imply that the integrality gap for this instance is at least $\frac{20}{9}-\frac{4}{k}$, which can be arbitrarily close to $\frac{20}{9}$. More precisely, for any $\epsilon >0$, we can set $k>\frac{4}{\epsilon}$ to obtain a lower bound of $\frac{20}{9}-\epsilon$.

Recall that for a graph $H$, we use $V(H)$ and $E(H)$ to denote the set of vertices and edges in $H$, and for $E' \subseteq E(H) $, we use $c(E')$ to denote the total cost of edges in $E'$. Let $F$ be a feasible multicut solution to $M$. Let $t=|E_{1}\cap F|$. For now assume that $0<t<k$. Without loss of generality, assume that $(v_{0},v_{i})\in F$ for $i=1,\ldots,t$. 
\begin{Claim}
    $c(F \cap E(H''_{i})) \geq 4 \cdot (k-1)+k$ for $i=1,2,\ldots,t$.
\end{Claim}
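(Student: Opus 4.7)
The plan is to split the cost of $F \cap E(H''_i)$ into the contribution of the single edge $(v_0,v_i)$, which by hypothesis lies in $F$ and has cost $k$, and the contribution of $F \cap E(H'_i)$, where $H'_i := H''_i - v_0$ is the 1-sum of the $k$ copies of $H$ glued at $v_i$. Since the only link between $H'_i$ and the rest of $G$ passes through the already-cut edge $(v_0,v_i)$, $F \cap E(H'_i)$ must on its own be a feasible multicut for every source-sink pair of $M$ whose endpoints both lie in $V(H'_i)$. The goal is thus to prove $c(F \cap E(H'_i)) \geq 4(k-1)$.

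For each of the $k$ copies $H^1,\ldots,H^k$ of $H$ inside $H'_i$, let $v_\ell^j$ denote the vertex of $H^j$ playing the role of $v_\ell$ in $H$. The ``far'' vertices $v_4^j,v_5^j,v_6^j$ each sit at distance exactly $2$ from $v_i$ in $G$. Hence for any $j \neq l$ and any $x,y \in \{4,5,6\}$, the pair $(v_x^j,v_y^l)$ has distance exactly $4$ in $G$ and is therefore a source-sink pair of $M$. This forces the dichotomy: at most one index $j^\star$ may have any of $v_4^{j^\star},v_5^{j^\star},v_6^{j^\star}$ still joined to $v_i$ in $G-F$, for otherwise two such far vertices from distinct copies would share the component of $v_i$ and violate feasibility of $F$.

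For each of the at least $k-1$ remaining indices $j \neq j^\star$, the set $F$ must disconnect $v_i$ from $\{v_4^j,v_5^j,v_6^j\}$ in $G-F$. Since $H^j$ and $H^l$ meet only at $v_i$, the vertices $v_2^j,v_3^j,v_4^j,v_5^j,v_6^j$ are incident only to edges of $E(H^j)$, so any walk in $G-F$ starting at $v_i$ and landing on one of $v_4^j,v_5^j,v_6^j$ must use the edge $(v_i,v_2^j)$ or $(v_i,v_3^j)$ and afterwards stay inside $H^j$. Consequently $F \cap E(H^j)$ itself must form a $(v_i,\{v_4^j,v_5^j,v_6^j\})$-cut in $H^j$. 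A short max-flow computation in $H$ (with $c(v_1 v_2)=c(v_1 v_3)=2$ and the four other edges of cost $1$) yields a minimum such cut of value exactly $4$; for example, $\{v_1 v_2, v_1 v_3\}$ is a cut of cost $4$, and four edge-disjoint unit paths $v_1\!-\!v_2\!-\!v_5$, $v_1\!-\!v_2\!-\!v_4$, $v_1\!-\!v_3\!-\!v_4$, $v_1\!-\!v_3\!-\!v_6$ certify matching max-flow. Summing $c(F \cap E(H^j)) \geq 4$ over the $k-1$ isolated copies (using that the $E(H^j)$ partition $E(H'_i)$) gives $c(F \cap E(H'_i)) \geq 4(k-1)$, and adding the cost $k$ of $(v_0,v_i)$ concludes.

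The step most in need of care is the ``stay inside $H^j$'' justification: one must rule out that a path from $v_i$ to a far vertex of $H^j$ in $G-F$ detours through some other copy $H^l$ or through $v_0$. This is handled precisely by the two facts that $(v_0,v_i)$ has already been cut (so $v_0$ is unreachable from $v_i$ in $G-F$) and that every neighbour in $G$ of a far or gateway vertex of $H^j$ already lies in $V(H^j)$; once this is in hand, the min-cut bound in a single $H$ transfers verbatim and the claim follows.
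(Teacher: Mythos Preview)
Your proposal is correct and follows essentially the same approach as the paper: isolate the cost-$k$ edge $(v_0,v_i)$, argue that at most one copy of $H$ attached at $v_i$ can still have a far vertex connected to $v_i$ (otherwise two distance-$4$ vertices would share a component), and then lower-bound $c(F\cap E(H^j))\ge 4$ for each of the remaining $k-1$ copies. The only difference is cosmetic: where the paper appeals to ``a simple case analysis'' for the per-copy bound, you obtain it via a min-cut/max-flow computation in $H$, which is arguably cleaner (note though that your four ``edge-disjoint unit paths'' are really four unit flow paths respecting the capacities $c(v_1v_2)=c(v_1v_3)=2$, not edge-disjoint in the simple graph).
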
\label{claim:example}
\begin{proof}
    We will prove the claim for $i=1$. The proof for other values of $i$ is identical. Denote $H_{1},\ldots,H_{k}$ as the $k$ copies of $H$ incident at $v_{1}$. For each $1\leq i\leq k$, we call $H_{i}$ good if $rad_{F}(v_{1}) \leq 1$, and we call it bad otherwise. It is not too difficult to see that there is at most one bad $H_{i}$. Suppose that $H_{1},H_{2}$ are bad graphs, then there exists $a\in V(H_{1}), b\in V(H_{2})$ such that $l(a,v_{1}),l(b,v_{1})\geq 2$. But this is a contradiction since $a$ and $b$ are within the same component as $v_{1}$ after the removal of $F$, and are at a distance 4 apart, i.e.~they are a source-sink pair in the multicut instance $M$. Thus, there are at least $k-1$ good graphs attached to $v_{1}$. By doing a simple case analysis, it can be verified that $c(F \cap E(H_i))\geq 4$ if $H_i$ is good. Combining the above with the fact that the edge between $v_{0},v_{1}$ is included in $F$, we obtain the statement of the claim.
\end{proof}
Note that $v_{t+1},\ldots,v_{k}$ are within the same component as $v_{0}$ after the removal of $F$. For each $t+1\leq j\leq k$, we call $H''_{j}$ \emph{good} if $rad_{F}(v_{0})\leq 1$ in $H''_{j}$, otherwise we call it \emph{bad}. Using a similar argument as in the proof of Claim~\ref{claim:example}, one can show that there at most $1$ bad $H''_{j}$. Thus, we have at least $k-t-1$ good $H''_{j}$'s. 

\begin{Claim}
    $c(F \cap E(H''_{j})) \geq 5 \cdot k$ if $H''_{j}$ is \emph{good} for $t+1\leq j\leq k$.
\end{Claim}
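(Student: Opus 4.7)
The plan is to decompose $c(F \cap E(H''_j))$ across the $k$ internal copies of $H$ that make up $H''_j$. Denote these copies by $H_1^{j},\ldots,H_k^{j}$, all sharing the vertex $v_j$ (the unique neighbor of $v_0$ in $H''_j$), and label the remaining vertices of $H_i^{j}$ as $v_2^{i,j}, v_3^{i,j}, v_4^{i,j}, v_5^{i,j}, v_6^{i,j}$ mirroring the labels in the figure of $H$. Because $j > t$, the edge $(v_0, v_j)$ is outside $F$, so $v_j$ lies in the component of $v_0$ inside $H''_j - F$; the \emph{good} hypothesis $\text{rad}_F(v_0) \leq 1$ in $H''_j$ then forbids any vertex at distance at least $2$ from $v_0$ from belonging to this component. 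I will show that each copy $H_i^{j}$ forces at least $5$ units of cost from $F$, yielding $c(F \cap E(H''_j)) \geq 5k$.

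First, I would harvest $4$ units of $E_2$-cost per copy. For each $i$, both $v_2^{i,j}$ and $v_3^{i,j}$ sit at distance exactly $2$ from $v_0$, so neither can remain in the component of $v_j$ after removing $F$. Since each of them is joined to $v_j$ by a single direct edge (cost $2$, belonging to $E_2$), both $(v_j, v_2^{i,j})$ and $(v_j, v_3^{i,j})$ must lie in $F$. Summing over the $k$ copies already produces a contribution of $4k$.

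Next, I would extract an additional unit of $E_3$-cost per copy by invoking the source-sink pair $(v_5^{i,j}, v_6^{i,j})$, which satisfies $l(v_5^{i,j}, v_6^{i,j}) = 4$ in $G$ and hence lies in $S$. After the two $E_2$-cuts identified above, any path in $G - F$ from $v_5^{i,j}$ to $v_6^{i,j}$ is forced to stay inside $H_i^{j} \setminus \{v_j\}$, which is a tree, and the unique path in that tree is $v_5^{i,j} - v_2^{i,j} - v_4^{i,j} - v_3^{i,j} - v_6^{i,j}$. Hence $F$ must contain at least one of these four $E_3$-edges (cost $1$ each), contributing at least $1$ per copy and $k$ overall.

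Adding the two contributions gives $c(F \cap E(H''_j)) \geq 4k + k = 5k$, as claimed. The main obstacle is the ``no-detour'' step: ruling out $v_5^{i,j}$-$v_6^{i,j}$ paths that leave $H_i^{j}$ and return through $v_0$ or through other copies $H_{i'}^{j}$ and $H''_{j'}$. This is settled cleanly by the observation that $v_j$ is a cut vertex separating $H_i^{j}$ from the rest of $G$, and both of its edges into $H_i^{j}$ already belong to $F$ by the first step, so no escape from $H_i^{j} \setminus \{v_j\}$ is possible in $G - F$.
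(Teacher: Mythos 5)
Your proposal is correct and follows essentially the same route as the paper: the \emph{good} hypothesis together with $(v_0,v_j)\notin F$ forces every cost-$2$ edge of $H''_j$ into $F$ (contributing $4$ per copy of $H$), and the distance-$4$ source-sink pair $(v_5,v_6)$ inside each copy then forces at least one additional cost-$1$ edge, giving $5$ per copy and $5k$ in total. Your explicit justification of the ``no-detour'' step via the cut vertex $v_j$ is a detail the paper leaves implicit, but the argument is the same.
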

\begin{proof}
    Since the edge between $v_{0},v_{j}$ is not included in $F$, $E_{2}\cap E(H''_{j})\subseteq F$ or equivalently, all the edges with cost $2$ of $H''_{j}$ are included in $F$. On the other hand, let $H_{i}$ be one of the attached copies of $H$ to $v_{j}$. Note that we have already showed that $E_{2}\cap E(H_i)\in F$. If $E_{3}\cap E(H_i) \cap F = \varnothing$, then there is a source-sink pair at distance 4 in $H_i$ which is not disconnected. 
     Therefore, in each $H_i$, $F$ picks edges of total weight at least $5$. 
\end{proof}
\noindent Therefore we obtain, $$c(F)\geq t \cdot (5k-4)+(k-1-t) \cdot (5k)=5k^{2}-5k-4t\geq 5k^{2}-9k.$$ Even when $t=0,k$, one can use the same arguments as above to obtain the same lower bound on the cost of $F$. This concludes the proof of the theorem.
\section{Conclusions and Future Work}
We improve upon a decade-old lower bound on the multiflow-multicut gap for planar graphs and, in doing so, develop new techniques. The main question our work raises is whether tight gap results can be obtained, even for the class of cactus and series-parallel graphs, and more generally, for planar graphs. Proving such a result likely requires new techniques, making it an interesting and challenging problem.\\

\noindent \textbf{Acknowledgment}: We would like to thank Joseph Cheriyan for many helpful discussions throughout the course of this project.

\printbibliography

@InProceedings{kalantarzadeh_et_al:LIPIcs.APPROX/RANDOM.2025.14,
  author =	{Kalantarzadeh, Sina and Kumar, Nikhil},
  title =	{{Improved Lower Bounds on Multiflow-Multicut Gaps}},
  booktitle =	{Approximation, Randomization, and Combinatorial Optimization. Algorithms and Techniques (APPROX/RANDOM 2025)},
  pages =	{14:1--14:18},
  series =	{Leibniz International Proceedings in Informatics (LIPIcs)},
  ISBN =	{978-3-95977-397-3},
  ISSN =	{1868-8969},
  year =	{2025},
  volume =	{353},
  editor =	{Ene, Alina and Chattopadhyay, Eshan},
  publisher =	{Schloss Dagstuhl -- Leibniz-Zentrum f{\"u}r Informatik},
  address =	{Dagstuhl, Germany},
  URL =		{https://drops.dagstuhl.de/entities/document/10.4230/LIPIcs.APPROX/RANDOM.2025.14},
  URN =		{urn:nbn:de:0030-drops-243803},
  doi =		{10.4230/LIPIcs.APPROX/RANDOM.2025.14},
  annote =	{Keywords: Approximation Algorithms, Randomized Algorithms, Linear Programming, Graph Algorithms, Scheduling, Multicut, Multiflow}
}

@book{SchrijverFarkasLemma,
    author = {Schrijver},
    title = {Theory of Linear and Integer Programming},
    publisher = {Wiley},
    year = {1998},
    page = {90}
}

@article{VempalaConvexCombination,
    author = {Carr, Robert and Vempala, Santosh},
    title = {Randomized metarounding},
    journal = {Probabilistic Methods in Combinatorial Optimization},
    volume = {20},
    issue = {3},
    pages = {342-352},
    publisher = {Wiley},
    year = {2002}
}

@article{garg1996approximate,
  title={Approximate max-flow min-(multi) cut theorems and their applications},
  author={Garg, Naveen and Vazirani, Vijay V and Yannakakis, Mihalis},
  journal={SIAM Journal on Computing},
  volume={25},
  number={2},
  pages={235--251},
  year={1996},
  publisher={SIAM}
}

@article{garg1997primal,
  title={Primal-dual approximation algorithms for integral flow and multicut in trees},
  author={Garg, Naveen and Vazirani, Vijay V. and Yannakakis, Mihalis},
  journal={Algorithmica},
  volume={18},
  number={1},
  pages={3--20},
  year={1997},
  publisher={Springer}
}

@inproceedings{klein1993excluded,
  title={Excluded minors, network decomposition, and multicommodity flow},
  author={Klein, Philip and Plotkin, Serge A and Rao, Satish},
  booktitle={Proceedings of the twenty-fifth annual ACM symposium on Theory of computing},
  pages={682--690},
  year={1993},
  organization={ACM}
}

@article{tardos1993improved,
  title={Improved bounds for the max-flow min-multicut ratio for planar and {$K_{r,r}$}-free graphs},
  author={Tardos, Eva and Vazirani, Vijay V},
  journal={Information Processing Letters},
  volume={47},
  number={2},
  pages={77--80},
  year={1993},
  publisher={Elsevier}
}

@article{abraham2014cops,
  title={Cops, robbers, and threatening skeletons: Padded decomposition for minor-free graphs},
  author={Abraham, Ittai and Gavoille, Cyril and Gupta, Anupam and Neiman, Ofer and Talwar, Kunal},
  journal={SIAM Journal on Computing},
  volume={48},
  number={3},
  pages={1120--1145},
  year={2019},
  publisher={SIAM Journal on Computing}
}

@incollection{fakcharoenphol2003improved,
  title={An improved decomposition theorem for graphs excluding a fixed minor},
  author={Fakcharoenphol, Jittat and Talwar, Kunal},
  booktitle={Approximation, Randomization, and Combinatorial Optimization.. Algorithms and Techniques},
  pages={36--46},
  year={2003},
  publisher={Springer}
}

@inproceedings{rao1999small,
  title={Small distortion and volume preserving embeddings for planar and Euclidean metrics},
  author={Rao, Satish},
  booktitle={Proceedings of the Fifteenth Annual Symposium on Computational Geometry},
  pages={300--306},
  year={1999},
  organization={ACM}
}

@article{chekuri2013flow,
  title={Flow-cut gaps for integer and fractional multiflows},
  author={Chekuri, Chandra and Shepherd, F Bruce and Weibel, Christophe},
  journal={Journal of Combinatorial Theory, Series B},
  volume={103},
  number={2},
  pages={248--273},
  year={2013},
  publisher={Elsevier}
}

@article{gupta2004cuts,
  title={Cuts, trees and {$l_1$}-embeddings of graphs},
  author={Gupta, Anupam and Newman, Ilan and Rabinovich, Yuri and Sinclair, Alistair},
  journal={Combinatorica},
  volume={24},
  number={2},
  pages={233--269},
  year={2004},
  publisher={Springer}
}

@inproceedings{chakrabarti2008embeddings,
  title={Embeddings of topological graphs: lossy invariants, linearization, and 2-sums},
  author={Chakrabarti, Amit and Jaffe, Alexander and Lee, James R and Vincent, Justin},
  booktitle={49th Annual IEEE Symposium on Foundations of Computer Science, 2008},
  pages={761--770},
  year={2008},
  organization={IEEE}
}

@article{lee2010coarse,
  title={Coarse differentiation and multi-flows in planar graphs},
  author={Lee, James R and Raghavendra, Prasad},
  journal={Discrete \& Computational Geometry},
  volume={43},
  number={2},
  pages={346--362},
  year={2010},
  publisher={Springer}
}

@Article{Linial1995,
author="Linial, Nathan
and London, Eran
and Rabinovich, Yuri",
title="The geometry of graphs and some of its algorithmic applications",
journal="Combinatorica",
year="1995",
month="Jun",
day="01",
volume="15",
number="2",
pages="215--245",

}

@article{hu1963multi,
  title={Multi-commodity network flows},
  author={Hu, T Chiang},
  journal={Operations research},
  volume={11},
  number={3},
  pages={344--360},
  year={1963},
  publisher={INFORMS}
}

@article{filtser2024optimal,
  title={Optimal padded decomposition for bounded treewidth graphs},
  author={Filtser, Arnold and Friedrich, Tobias and Issac, Davis and Kumar, Nikhil and Le, Hung and Mallek, Nadym and Zeif, Ziena},
  journal={arXiv preprint arXiv:2407.12230},
  year={2024}
}

@inproceedings{friedrich2023approximate,
  title={Approximate max-flow min-multicut theorem for graphs of bounded treewidth},
  author={Friedrich, Tobias and Issac, Davis and Kumar, Nikhil and Mallek, Nadym and Zeif, Ziena},
  booktitle={Proceedings of the 55th Annual ACM Symposium on Theory of Computing},
  pages={1325--1334},
  year={2023}
}

@article{ford1956maximal,
  title={Maximal flow through a network},
  author={Ford, Lester Randolph and Fulkerson, Delbert R},
  journal={Canadian journal of Mathematics},
  volume={8},
  pages={399--404},
  year={1956},
  publisher={Cambridge University Press}
}

@inproceedings{conroy2025protect,
  title={How to protect yourself from threatening skeletons: Optimal padded decompositions for minor-free graphs},
  author={Conroy, Jonathan and Filtser, Arnold},
  booktitle={Proceedings of the 57th Annual ACM Symposium on Theory of Computing},
  pages={2281--2292},
  year={2025}
}

\end{document}